\newtheorem{theorem}{Theorem}
\newtheorem{lemma}[theorem]{Lemma}
\theoremstyle{definition}
\newcommand{\qedsymb}{\hfill{\rule{2mm}{2mm}}}
\renewenvironment{proof}[1][]{\begin{trivlist} 
\item[\hspace{\labelsep}{\bf\noindent Proof#1:\/}] }{\qedsymb\end{trivlist}}
\newcommand{\be}{\begin{equation}}
\newcommand{\ee}{\end{equation}}
\newcommand{\bea}{\begin{eqnarray}}
\newcommand{\eea}{\end{eqnarray}}
\newcommand{\ba}{\begin{array}}
\newcommand{\ea}{\end{array}}
\newcommand{\bc}{\begin{cases}}
\newcommand{\ec}{\end{cases}}
\newcommand{\beba}{\begin{equation}\begin{array}{lll}}
\newcommand{\eeea}{\end{array}\end{equation}}
\newcommand{\bp}{\begin{pmatrix}}
\newcommand{\ep}{\end{pmatrix}}
\newcommand{\bit}{\begin{itemize}}
\newcommand{\eit}{\end{itemize}}
\newcommand{\ben}{\begin{enumerate}}
\newcommand{\een}{\end{enumerate}}
\newcommand{\ket}[1]{|#1\rangle}
\newcommand{\bracket}[2]{\langle #1|#2\rangle}
\newcommand{\ketbra}[2]{|#1\rangle \langle #2 |}
\newcommand{\ip}[2]{\langle #1, #2\rangle}
\newcommand{\ssll}[2]{\sum\limits_{#1}^{#2}}
\newcommand{\vx}{\vec{x}}
\newcommand{\vy}{\vec{y}}
\newcommand{\vz}{\vec{z}}
\newcommand{\MNN}{\mathbb{M}_{N,N}}
\newcommand{\UN}{\mathbb{U}_N}
\newcommand{\ON}{\mathbb{O}_N}
\newcommand{\CN}{\mathbb{C}^N}
\newcommand{\Cn}{\mathcal{C}_n}
\newcommand{\Pn}{\mathcal{P}_n}
\newcommand{\CNN}{\mathbb{C}^{N \times N}}
\newcommand{\ME}{\ket{\Phi^+_N}}
\renewcommand{\S}{\mathcal{S}}
\newcommand{\T}{\mathcal{T}}
\renewcommand{\span}{\mathrm{span}}
\newcommand{\WK}{\mathcal{W}_K}
\newcommand{\tr}{\mathrm{tr}}
\newcommand{\diag}{\mathrm{diag}}
\newcommand{\supp}{\mathrm{supp}}
\begin{document}

\title{Property testing of unitary operators}

\author{Guoming Wang}
\email{gmwang@eecs.berkeley.edu}
\affiliation{Computer Science Division, University of California Berkeley, Berkeley, California 94720, USA}

\date{\today}

\begin{abstract}
In this paper, we systematically study property  testing of unitary operators. We first introduce a distance measure that reflects the average difference between unitary operators. Then we show that, with respect to this distance measure, the orthogonal group, quantum juntas (i.e. unitary operators that only nontrivially act on a few qubits of the system) and Clifford group can be all efficiently tested. In fact, their testing algorithms have query complexities independent of the system's size and have only one-sided error. Then we give an algorithm that tests any finite subset of the unitary group, and demonstrate an application of this algorithm to the permutation group. This algorithm also has one-sided error and polynomial query complexity, but it is unknown whether it can be efficiently implemented in general. 
\end{abstract}

\pacs{03.65.Wj, 03.67.Ac}

\maketitle

\section{Introduction}

Characterizing the dynamical behavior of complex quantum systems is an important yet daunting task. The standard approach known as quantum process tomography (QPT) \cite{NC00,CN97,PCZ97} can provide full information about the quantum process, but it consumes a huge amount of resource. Namely, in order to fully determine a quantum operation acting on a system consisting of $n$ qubits, QPT needs to use $\Theta(16^n)$ observables to estimate all the parameters necessary to describe this operation. Even if this operation is known to be unitary, it still needs $\Theta(4^n)$ observables. Although many improvements and variants of QPT have been proposed \cite{AP01,AP03,ABJ+03,EAZ05,ML06,ESM+07,ML07,BPP08,DCEL09,
BNWK09,BPP09,SLP10,DLF+10,SKM+11,SBLP11,MGE11}, in general the resource consumption of this approach still grows quickly as the system becomes large.

On the other hand, in many situations we might not need to fully determine the quantum operation, but merely wish to know whether it satisfies certain property or is far from having this property (assuming it is one of the two cases). For example, given a quantum machine acting on an $n$-qubit system, it is natural to ask whether it only nontrivially acts on a few qubits, or it has non-negligible effect on every qubit. Similar questions have been raised and studied in the classical situation. For example, given a boolean function (or a graph) as an oracle, we want to know whether the function is linear (or the graph is connected) or is far from any of such functions (or graphs) with respect to some reasonable metric, by making only a few queries to the oracle. This problem is usually called \textit{property testing}\cite{BLR93,RS96,GGR98}. It has been extensively studied in computer science and has wide applications such as probabilistically checkable proofs (PCP)\cite{ALM+98}. Surprisingly, many properties of functions and graphs are found to be testable with very few queries. Sometimes the query complexity is even independent of the input's size.

Given these facts, one may naturally wonder whether the less ambitious goal of property testing of quantum states or operations would lead to a dramatic decrease in resource consumption. Several previous results indicated that it is indeed the case\cite{MO09,Low09,HM10,SCP11,FL11}. For example, the separability of multipartite states and operations can be tested with query complexity independent of the system's size\citep{HM10}.
In this paper, we will continue this line of research and focus on studying property testing of unitary operators. (The reader should not confuse our work with \textit{quantum property testing} \cite{BFNR02}, which concerns about the testing of classical objects with quantum algorithms. Here we are interested in the testing of quantum objects themselves.) We first introduce a normalized distance measure that quantifies the average difference between unitary operators. Then we show that, with respect to this distance measure, the orthogonal group, quantum juntas (i.e. unitary operators that only nontrivially act on a few qubits of the system) and Clifford group can be all efficiently tested. In fact, their testing algorithms have query complexities independent of the system's size and also have only one-sided error. Next, we give a general algorithm that tests any finite subset of the unitary group, and demonstrate an application of this algorithm to the permutation group. This algorithm also has one-sided error and polynomial query complexity, but we do not know whether it can be efficiently implemented in general. 

The remainder of this paper is organized as follows.
In section 2, we introduce the definitions, notations and tools used in this paper.
Then, in sections 3, 4 and 5, we study the testing of orthogonal group, quantum juntas and Clifford group respectively. In section 6, we present an algorithm that tests any finite subset of the unitary group, and then exhibits its application to the permutation group. Finally, section 7 concludes this paper.

\section{Preliminaries}

\subsection{Definitions and Notations}
In this section, we introduce the definitions and notations used in this papers.

Let $n \ge 1$ and $N=2^n$. We use $\MNN$ to denote the set of linear operators from $\CN$ to $\CN$ (given a fixed basis for $\CN$, they are represented by $N \times N$ matrices with complex entries), and use $\UN=\{U\in \MNN: UU^{\dagger}=U^{\dagger}U=I\}$
to denote the set of $N$-dimensional unitary operators.
We are going to regard $\MNN$ as a Hilbert space 
equipped with the Hilbert-Schmidt inner product
\beba
\ip{A}{B}=\tr(A^{\dagger}B).
\label{eq:ip}
\eeea
This inner product induces the Hilbert-Schmidt
(or Frobenius) norm for $A=(a_{i,j})_{i,j=1}^N$
\beba
\|A\|&=&\sqrt{\tr(A^{\dagger}A)}\\
&=&\sqrt{\ssll{i=1}{N} \ssll{j=1}{N} |a_{i,j}|^2}.\\
\eeea
This norm further induces the following metric
\beba
d(A,B)=\|A-B\|.
\eeea
But this metric might be not good for comparing unitary operators, since in general we have $d(U,V) \neq d(e^{i\theta}U,V)$ for $\theta \in (0,2\pi)$, although $U$ and $e^{i\theta}U$ are usually considered as the same operation since they are equivalent up to a global phase. To overcome this problem, we introduce another distance measure as follows. First, we define an equivalence relation between linear operators as follows: $A \sim B$ if and only if $A=e^{i\theta} B$ for some $\theta \in [0,2\pi)$. Then for any $A \in \MNN$, define $[A]=\{B:~A \sim B\}$. The distance between $A$ and $B$ is given by 
\beba
D(A,B)&=&\min\limits_{C \in [A], D \in [B]}
\dfrac{1}{\sqrt{2N}}\|C-D\|\\
&=&\min\limits_{\theta \in [0,2\pi)}\dfrac{1}{\sqrt{2N}}\|e^{i\theta}A-B\|.
\label{eq:D1}
\eeea
More generally, for any $\S \subseteq \MNN$, define $[\S]=\cup_{A \in \S} [A]$.
And the distance between two sets $\S$ and $\T$ is given by
\beba
D(\S,\T)&=\min\limits_{A \in \S, B \in \T} D(A,B)\\
&=\min\limits_{A \in [\S], B \in [\T]} \dfrac{1}{\sqrt{2N}} \|A-B\|.
\label{eq:D2}
\eeea
It can be easily checked that 
\ben
\item $D(A,B) \ge 0$, and the equality holds if and only if $A \sim B$.  
\item $D(A,B)=D(B,A)$.
\item $D(A,B)+D(B,C) \ge D(A,C)$.
\een
Besides, for unitary operators, $D$ has the following nice properties:
\ben
\item $D(U,V) \le 1$.
\item $D(UV_1,UV_2)=D(V_1,V_2)$.
\item $D(U \otimes V_1,U \otimes V_2)=D(V_1,V_2)$.
\item $D(U_1V_1,U_2V_2) \le D(U_1,U_2)+D(V_1,V_2)$.
This is a consequence of $D(U_1V_1,U_2V_2) \le D(U_1V_1,U_2V_1)+D(U_2V_1,U_2V_2)$
and property 2.
\item $D(U_1 \otimes V_1, U_2 \otimes V_2)
\le D(U_1,U_2)+D(V_1,V_2)$.
This is a consequence of $D(U_1 \otimes V_1,U_2 \otimes V_2) \le D(U_1 \otimes V_1,U_2 \otimes V_1)+D(U_2 \otimes V_1,U_2 \otimes V_2)$ and property 3.
\een
Thus $D$ is a normalized distance measure that reflects the average difference between unitary operators. In addition, the following relation between $D(U,V)$ and $\ip{U}{V}$ would be very useful:
\beba
D^2(U,V)=1-\dfrac{1}{N}|\ip{U}{V}|.
\label{eq:dip}
\eeea

\subsection{Our Question}
The task of property testing is typically described as follows. Suppose some unknown object, such as a graph or a boolean function, is given as an oracle which can be queried locally many times. Our goal is to determine whether this object has certain global property or is far from having this property, by making as few queries as possible. 	

Formally, let $\Omega$ be a predetermined set from which the object
is chosen. $\Omega$ should be also equipped with a distance
measure $d$. A property is a subset $\S \subset \Omega$. For any $A \in \Omega$, if $A \in \S$, then we say that $A$ has property $S$; otherwise, if $d(A,\S) \ge \epsilon$, i.e. $d(A,B)\ge \epsilon$
for any $B \in \S$, then we say that $A$ is  $\epsilon$-far from property $\S$. An algorithm $\epsilon$-tests property $\S$ if for any input $A \in \Omega$, 
\bit
\item if $A$ has property $\S$, then the algorithm accepts $A$ with probability
at least $2/3$;
\item if $A$ is $\epsilon$-far from property $\S$, then the algorithm accepts $A$ with probability at most $1/3$.
\eit
Besides, if the algorithm makes at most $q(|\Omega|,\epsilon)$ queries to the oracle, then we say that it has query complexity $O(q(|\Omega|,\epsilon))$.
A testing algorithm would be very efficient if its query complexity  depends only on $\epsilon$
but not on $|\Omega|$.

In this paper, we will study the problem of property testing of unitary operators. In our case, $\Omega=\UN$ and we use $D$ defined 
as Eqs.(\ref{eq:D1}) and (\ref{eq:D2}) as the distance measure. 
However, we need to slightly change the definition of having a property and being far from a property as follows: let $\S \subset \UN$ be a property. We say $U \in \UN$ has property
$\S$ if $D(U,\S)=0$, i.e. $U \in [\S]$; otherwise, we say $U$ is $\epsilon$-far from property $S$ if $D(U,\S) \ge \epsilon$, i.e. $D(U,V) \ge \epsilon$, $\forall V \in \S$. Our input is a blackbox implementing some $U \in \UN$ which can be 
accessed as follows: first, we prepare some state 
$\ket{\psi_{AB}}$, where $A,B$ are its two subsystems such that $\dim(A)=N$ and $B$ is some auxiliary system; then we apply $U$ on the $A$ subsystem, obtain $\ket{\psi_{AB}'}=(U \otimes I)\ket{\psi_{AB}}$; finally we perform some measurement on $\ket{\psi_{AB}'}$ and get information about $U$. In certain cases, we are also allowed to access a blackbox implementing $U^{\dagger}$. Our goal is that given any $\S \subseteq \UN$, find an algorithm that $\epsilon$-tests $\S$ with the minimal query complexity.

\subsection{Useful tools}

The following tools will be very useful for our work.

\subsubsection{Choi-Jamio\l kowski Isomorphism}

The Choi-Jamio\l kowski isomorphism \cite{Jam72,Cho75} states that there is a duality between 
quantum channels and quantum states. In particular, there exists an isomorphism 
between unitary operators in $\UN$ and pure states in $\CNN$.
Specifically, let 
\beba
\ME=\dfrac{1}{\sqrt{N}}\ssll{i=1}{N}\ket{i}\ket{i}
\eeea
be the $N$-dimensional Bell state.
For any $A \in \MNN$, define
\beba
\ket{v(A)} = (A \otimes I)\ME,
\eeea
where $A$ is applied to the first subsystem.
Then we have
\beba
\bracket{v(A)}{v(B)}&=&\dfrac{1}{N}\ip{A}{B}.\\
\eeea
In particular, for any $U,V \in \UN$,  we have
\beba
\bracket{v(U)}{v(V)} &=& \dfrac{1}{N}\ip{U}{V},\\
\label{eq:cjuv}
\eeea
and
\beba
\|\ket{v(U)}\| = \|\ket{v(V)}\| = 1.
\eeea
So the angle between $\ket{v(U)}$ and $\ket{v(V)}$ faithfully reflects the ``angle'' between $U$ and $V$ with respect to the Hilbert-Schimdt product. And if we perform the projective measurement $\{\ketbra{v(V)}{v(V)}, I-\ketbra{v(V)}{v(V)}\}$ on the state $\ket{v(U)}$, then the probability of obtaining the outcome corresponding to $\ketbra{v(V)}{v(V)}$ is $|\langle U,V\rangle|^2/N^2$.

\subsubsection{Singular Value Decomposition}
Suppose $A$ has the singular value decomposition
\beba
A=V_1 \Sigma V_2,
\label{eq:svd}
\eeea
where $\Sigma=\diag(\sigma_1,\sigma_2,\dots,\sigma_N)$
with $\sigma_i \ge 0$, and $V_1,V_2 \in \UN$. Then we have
\beba
\|A\|=\sqrt{\ssll{i=1}{N}\sigma_i^2}.
\eeea
If $A$ is unitary, then its singular values $\sigma_i=1$. The following lemma shows that the converse is also true in an approximate sense:
(In what follows, when we write $A \le B$ for $A,B \in \MNN$, we means that $B-A$ is semidefinite positive.)

\begin{lemma}
If $A \in \MNN$ satisfies $A^{\dagger}A \le I$
and $\|A\|^2 \ge N(1-\epsilon)$,
then $D(A,\UN) \le \sqrt{\epsilon/2}.$
\label{lem:svd}
\end{lemma}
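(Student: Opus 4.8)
The plan is to use the singular value decomposition of $A$ to exhibit an explicit unitary that is close to $A$. Write $A=V_1\Sigma V_2$ as in Eq.~(\ref{eq:svd}), with $\Sigma=\diag(\sigma_1,\dots,\sigma_N)$, $\sigma_i\ge 0$, and $V_1,V_2\in\UN$. The hypothesis $A^{\dagger}A\le I$ forces $\sigma_i^2\le 1$, hence $0\le\sigma_i\le 1$ for every $i$, while the hypothesis $\|A\|^2\ge N(1-\epsilon)$ reads $\sum_{i=1}^N\sigma_i^2\ge N(1-\epsilon)$.

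The natural candidate unitary is $U=V_1V_2\in\UN$, obtained by replacing $\Sigma$ with the identity. Since $D(A,\UN)\le D(A,U)\le\frac{1}{\sqrt{2N}}\|A-U\|$ (taking $\theta=0$ in Eq.~(\ref{eq:D1})), it suffices to bound $\|A-U\|$. By unitary invariance of the Frobenius norm, $\|A-U\|=\|V_1(\Sigma-I)V_2\|=\|\Sigma-I\|$, so $\|A-U\|^2=\sum_{i=1}^N(1-\sigma_i)^2$.

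The only real computation is the scalar inequality: since $0\le\sigma_i\le 1$ we have $(1-\sigma_i)^2\le(1-\sigma_i)(1+\sigma_i)=1-\sigma_i^2$ (equivalently $(1-\sigma_i)(2\sigma_i)\ge 0$). Summing over $i$ gives $\|A-U\|^2\le\sum_{i=1}^N(1-\sigma_i^2)=N-\|A\|^2\le N\epsilon$, whence $D(A,\UN)\le\frac{1}{\sqrt{2N}}\sqrt{N\epsilon}=\sqrt{\epsilon/2}$, as claimed.

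I do not expect a genuine obstacle: the lemma is essentially the assertion that a contraction with near-maximal Frobenius norm must be close to a unitary, and the SVD reduces it to the elementary inequality above. The only point requiring mild care is that $D(A,\UN)$ is obtained here through one convenient choice of phase and of unitary approximant, which is all that is needed for the stated upper bound.
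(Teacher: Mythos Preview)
Your proof is correct and follows essentially the same approach as the paper: take the SVD $A=V_1\Sigma V_2$, set $U=V_1V_2$, reduce $\|A-U\|^2$ to $\sum_i(1-\sigma_i)^2$, and use $(1-\sigma_i)^2\le 1-\sigma_i^2$ together with the norm hypothesis. The paper's argument is line-for-line the same, so there is nothing to add.
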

\begin{proof}
Suppose $A$ has the singular value decomposition
as Eq.(\ref{eq:svd}). Note that the condition 
$A^{\dagger}A \le I$ is equivalent to 
\beba
\sigma_i \le 1,~~~ \forall i.
\label{eq:sigma1}
\eeea
And the condition
$\|A\|^2 \ge N(1-\epsilon)$ is equivalent to
\beba
\ssll{i=1}{N} \sigma_i^2 \ge N(1-\epsilon).
\label{eq:sigma2}
\eeea
Define $U=V_1V_2 \in \UN$. Then
\beba
D^2(U,A) &\le& \frac{1}{2N}\|U-A\|^2\\
&=&\frac{1}{2N}\|V_1(I-\Sigma)V_2\|^2\\
&=&\frac{1}{2N}\|I-\Sigma\|^2\\
&=&\frac{1}{2N}\ssll{i=1}{N} (1-\sigma_i)^2\\
&\le &\frac{1}{2N}\ssll{i=1}{N} (1-\sigma_i^2) \\
&\le &\epsilon/2,
\eeea
which implies $D(U,A) \le \sqrt{\epsilon/2}.$
\end{proof}

\subsubsection{Pauli Decomposition}
Let
\beba
\sigma_{0}=\bp
1 & 0\\
0 & 1\\
\ep=I,&
\sigma_{1}=\bp
0 & 1\\
1 & 0\\
\ep=X,\\
\sigma_{2}=\bp
0 & -i\\
i & 0\\
\ep=Y,&
\sigma_{3}=\bp
1 & 0\\
0 & -1\\
\ep=Z
\eeea
be the Pauli operators. And let $\mathbb{Z}_4=\{0,1,2,3\}$, 
$\mathbb{Z}^n_4=\{0,1,2,3\}^n$.
For any $\vec{x}=(x_1,x_2,\dots,x_n) \in \mathbb{Z}^n_4$,
define
\beba
\sigma_{\vec{x}}=\sigma_{x_1} \otimes \sigma_{x_2} \otimes \dots \otimes \sigma_{x_n}.
\eeea
Then $\{\frac{1}{\sqrt{N}}\sigma_{\vx}\}_{x \in \mathbb{Z}_4^n}$
form an orthonormal basis for $\MNN$ with respect to the Hilbert-Schmidt inner product. So we can write 
any $A \in \MNN$ as 
\beba
A = \sum\limits_{\vx \in \mathbb{Z}_4^n}{\mu_{\vx}(A)\sigma_{\vx}},
\label{eq:pd}
\eeea
where 
\beba
\mu_{\vx}(A)= \dfrac{1}{N}\ip{A}{\sigma_{\vx}}.
\label{eq:pdcof}
\eeea
It can be easily checked that
\beba
\|A\|^2=N\ssll{\vx \in \mathbb{Z}_4^n}{}{|\mu_{\vx}(A)|^2}.
\label{eq:paulinorm}
\eeea
In particular, any $U \in \UN$ satisfies
\beba
\frac{1}{N}\|U\|^2=\ssll{\vx \in \mathbb{Z}_4^n}{}{|\mu_{\vx}(U)|^2}=1.
\eeea

By the Choi-Jamio\l kowski isomorphism, $\{\ket{v(\sigma_{\vx})}\}_{x \in \mathbb{Z}_4^n}$ also form an orthonormal basis for $\CNN$. 
For any $A \in \MNN$, we have
\beba
\ket{v(A)} = \sum\limits_{\vx \in \mathbb{Z}_4^n}{\mu_{\vx}(A)\ket{v(x)}}.
\label{eq:pdst}
\eeea
Hence, if we measure the state $\ket{v(A)}$ (assuming it is normalized) in the basis $\{\ket{v(\sigma_{\vx})}: x \in \mathbb{Z}_4^n\}$, we get the outcome $\vx$ with probability $|\mu_{\vx}(A)|^2$.

Finally, we define $\oplus:\mathbb{Z}_4^n \times \mathbb{Z}_4^n \to \mathbb{Z}_4^n$
and $\odot:\mathbb{Z}_4^n \times \mathbb{Z}_4^n \to \mathbb{Z}_4$ such that for any $\vx,\vy \in \mathbb{Z}_4^n$,  
\beba
\sigma_{\vx}\sigma_{\vy}=
i^{ \vx \odot \vy} \sigma_{\vx \oplus \vy}.
\eeea
Then
\ben
\item $\vx \oplus \vy=\vy \oplus \vx$.
\item If $\sigma_{\vx}$ and $\sigma_{\vy}$
commute, then $i^{\vx \odot \vy}=i^{\vy\odot\vx}$; otherwise, $i^{\vx \odot \vy}=-i^{\vy\odot\vx}$.
\een

\section{Testing Orthogonal Group}

Let us begin with the testing of
orthogonal group
\beba
\ON &=& \{U \in \UN: U^TU=UU^T=I\}\\
&=& \{U \in \UN: U=U^*\},
\eeea
which is an important subgroup of $\UN$
that has wide applications. Our first result is

\begin{theorem}
The orthogonal group $\ON$ can be $\epsilon$-tested with query
complexity $O(1/\epsilon^2)$.
\label{thm:ort}
\end{theorem}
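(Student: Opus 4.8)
The plan is to reduce testing membership in $\ON$ to a statement about the Pauli coefficients of the input unitary $U$, exploiting the characterization $\ON=\{U\in\UN: U=U^*\}$. The key observation is that complex conjugation acts simply on the Pauli basis: since $X,Z$ are real and $Y$ is purely imaginary, we have $\sigma_{\vx}^*=(-1)^{w(\vx)}\sigma_{\vx}$, where $w(\vx)$ is the number of coordinates of $\vx$ equal to $2$. Hence $U^*=\sum_{\vx}\mu_{\vx}(U)^*(-1)^{w(\vx)}\sigma_{\vx}$, and comparing with $U=\sum_{\vx}\mu_{\vx}(U)\sigma_{\vx}$ one sees that $U\in\ON$ iff $\mu_{\vx}(U)$ is real for every $\vx$ with $w(\vx)$ even, and $\mu_{\vx}(U)$ is purely imaginary for every $\vx$ with $w(\vx)$ odd. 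Equivalently, writing $i^{-w(\vx)}\mu_{\vx}(U)$, membership in $\ON$ is the condition that all these rotated coefficients are real.

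The algorithm I would propose: prepare $\ket{v(U)}=(U\otimes I)\ME$ using one query to $U$, measure in the Pauli basis $\{\ket{v(\sigma_{\vx})}\}$ to obtain some $\vx$ with probability $|\mu_{\vx}(U)|^2$; then — and this is the point where a second ingredient is needed — estimate the phase of $\mu_{\vx}(U)$ relative to the ``allowed'' value $i^{w(\vx)}$. One clean way: run the swap test (or a Hadamard test) between $\ket{v(U)}$ and $\ket{v(\sigma_{\vx})}$, or directly between $\ket{v(U)}$ and $\ket{v(U^*)}$ — the latter requires access to $U^*$, which is obtained from a $U^\dagger$-oracle together with a transpose, or can be sidestepped. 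The cleanest route, and the one I expect the paper to take, is to test $D(U,\ON)$ via the quantity $\ip{U}{U^*}$: one checks that $D^2(U,\ON)$ is controlled by $\frac1N\bigl(\|U\|^2 - |\ip{U}{U^*}|\bigr)$-type expressions, or more directly that $\frac{1}{2N}\|U-U^*\|^2 = 1 - \mathrm{Re}\,\tfrac1N\ip{U^*}{U}$ is small iff $U$ is close to symmetric. Then apply Lemma~\ref{lem:svd} to $A=\tfrac12(U+U^*)$: this $A$ satisfies $A^\dagger A\le I$ (since it is an average of unitaries, $\|A\|\le 1$ in operator norm), $A=A^*$, and $\|A\|^2$ is large exactly when $\|U-U^*\|$ is small; Lemma~\ref{lem:svd} then yields a nearby unitary, which one further argues can be taken in $\ON$ because $A$ is real and its polar/SVD correction preserves reality.

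Concretely I would: (1) show $D^2(U,\ON)\le \tfrac12\bigl(1-\tfrac1N\mathrm{Re}\,\ip{U}{U^*}\bigr)$ up to constants, via the $A=\tfrac12(U+U^*)$ construction and Lemma~\ref{lem:svd}, taking care that the nearby unitary is orthogonal; (2) give an estimator for $\mathrm{Re}\,\tfrac1N\ip{U}{U^*}$ using $O(1)$ queries per trial — e.g. a Hadamard test on $\ket{v(U)}$ and $\ket{v(U^*)}$, where $\ket{v(U^*)}$ is prepared by applying $\bar U=U^*$ obtained from the $U^\dagger$ blackbox on the conjugate basis, or equivalently by swapping the two halves of the Bell state and using $U$ itself; (3) repeat $O(1/\epsilon^2)$ times and take the empirical mean, using a Chernoff/Hoeffding bound so that if $U\in\ON$ the estimate is $\ge 1-O(\epsilon)$ (accept) while if $D(U,\ON)\ge\epsilon$ the estimate is $\le 1-\Omega(\epsilon^2)$ (reject), with the thresholds separated; note one-sided error comes for free because $U\in\ON\Rightarrow U=U^*\Rightarrow$ the test always sees inner product exactly $1$. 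The total query count is $O(1/\epsilon^2)$.

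The main obstacle I anticipate is step (1): verifying that Lemma~\ref{lem:svd} applied to $A=\tfrac12(U+U^*)$ actually produces a unitary \emph{in $\ON$} rather than merely in $\UN$, and getting the constant in the distance bound right so that an $O(\epsilon)$-gap in the estimated inner product translates into the claimed $\epsilon$-test. Since $A$ is a real matrix, its SVD can be chosen real (real orthogonal $V_1,V_2$), so $U=V_1V_2$ in the lemma's proof is real orthogonal — this should close the gap, but it needs to be stated carefully. The secondary obstacle is purely implementational: arranging the estimator for $\mathrm{Re}\,\tfrac1N\ip{U}{U^*}$ so that it uses only $U$ (and possibly $U^\dagger$) a constant number of times per trial; the Choi–Jamio\l kowski picture makes this routine since $\ip{U}{U^*}/N = \bracket{v(U)}{v(U^*)}$ and $\ket{v(U^*)}$ differs from $\ket{v(U)}$ only by a basis relabeling on one subsystem, so a single controlled-SWAP suffices. \qed
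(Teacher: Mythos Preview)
Your proposal is essentially the paper's proof. The structural lemma you sketch is exactly the paper's Lemma~\ref{lem:ort}: set $A=\tfrac12(U+U^*)$, check $A^\dagger A\le I$ and $\|A\|^2\ge N(1-\delta/2)$ when $|\tr(UU^T)|\ge N(1-\delta)$, and then use that $A$ is real so its SVD factors $V_1,V_2$ can be taken in $\ON$, yielding $D(U,\ON)\le\sqrt\delta$. Your anticipated ``main obstacle'' and its resolution (real SVD of a real matrix) are precisely the content of that lemma.

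For the estimation step the paper is slicker than your Hadamard/swap-test proposal: it prepares $\ME$, applies $U\otimes U$, and projects back onto $\ME$, using the identity $(U\otimes U)\ME=(UU^T\otimes I)\ME$; the success probability per trial is $|\tr(UU^T)|^2/N^2$, so repeating $O(1/\epsilon^2)$ times and accepting iff every trial succeeds gives a one-sided $\epsilon$-tester that uses only the $U$-oracle---no $U^\dagger$, no controlled gates. One caution about your version: a Hadamard test estimates $\mathrm{Re}\,\tfrac1N\ip{U}{U^*}=\mathrm{Re}\,\tfrac1N\overline{\tr(UU^T)}$, which is \emph{not} phase-invariant---for $U=e^{i\theta}V$ with $V\in\ON$ it equals $\cos 2\theta$, not $1$---so to accept all of $[\ON]$ you must estimate the modulus (swap test or the paper's projection), not the real part.
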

\begin{proof}
Our basic idea is to show that: 
(1) if $UU^T$ is close to $I$, then $U$ is close to $\ON$;
(2) we can test the proximity between $UU^T$
and $I$ very efficiently. Now we provide the
details.

First, we prove 
\begin{lemma}
If $U \in \UN$ satisfies $|\tr(UU^T)|\ge N(1-\delta)$,
then $D(U,\ON) \le \sqrt{\delta}$.
\label{lem:ort}
\end{lemma}
\begin{proof}
Without loss of generality, we can assume that
$\tr(UU^T)$ is real, because, if otherwise,
we can replace $U$ by some $U_1 \in [U]$ such
that $\tr(U_1U_1^T)=|\tr(UU^T)|$
and $D(U,\ON)=D(U_1,\ON)$.

Let $A=(U+U^*)/2$. Then $A$ is real and 
\beba
A^{\dagger}A&=(U^{\dagger}+U^T)(U+U^*)/4 \\
&=(2I+U^{\dagger}U^*+U^TU)/4\\
&\le I.
\eeea
Moreover,
\beba
\|A\|^2&=& \tr(A^{\dagger}A)\\
&=&\tr(2I+U^{\dagger}U^*+U^TU)/4\\
&\ge & N(1-\delta/2).
\eeea
Suppose $A$ has the singular value decomposition
$A=V_1 \Sigma V_2$, where $V_1,V_2 \in \ON$
since $A$ is real. Then define $V=V_1V_2 \in \ON$.
Then similar to the proof of lemma \ref{lem:svd},
we can show
\beba
D(A,V) \le 	\sqrt{\delta}/2.
\eeea
Meanwhile, we have
\beba
\|U-A\|^2&=& \|(U-U^*)/2\|^2\\
&=&\tr((U^{\dagger}-U^T)(U-U^*))/4\\
&=& \tr(2I-U^{\dagger}U^*-U^TU)/4\\
&\le & N\delta/2,
\eeea
which implies
\beba
D(U,A) \le \sqrt{\delta}/2.
\eeea
Therefore,
\beba
D(U,\ON) &\le D(U,V)\\ 
&\le D(U,A)+D(A,V)\\
&\le \sqrt{\delta}.
\eeea
\end{proof}

Now consider the following testing algorithm:\\

\begin{algorithm}
\caption{~~~Testing $\ON$}
\begin{tabular}{p{0.06 \textwidth}p{0.4 \textwidth}}
\textbf{Input:} & $U \in \UN$ is given as a blackbox. $\epsilon>0$ is a proximity parameter.\\
\textbf{Steps:} & \\
& 1. Repeat the following procedure $O(1/\epsilon^2)$ times
\bit
\item Prepare the state $\ket{\Phi^+_N}$; 
\item Apply the operation $U \otimes U$ to it;
\item Perform the measurement $\{\ketbra{\Phi^+_N}{\Phi^+_N},
I-\ketbra{\Phi^+_N}{\Phi^+_N}\}$ on the state. If the outcome
corresponds to $\ketbra{\Phi^+_N}{\Phi^+_N}$, then this iteration
is successful.
\eit
\\
& 2. Accept if and only if all iterations are successful.\\
\end{tabular}
\label{alg:ort}
\end{algorithm}

Note that
\beba
(I \otimes A) \ket{\Phi^+_N} = (A^T \otimes I) \ket{\Phi^+_N},~~
\forall A \in \MNN.
\eeea
So
\beba
(U \otimes U) \ket{\Phi^+_N} = (UU^T \otimes I) \ket{\Phi^+_N}.
\eeea
The probability that an iteration is successful is
\beba
|\langle \Phi^+_N |(U \otimes U)| \Phi^+_N \rangle|^2 
&=& |\langle \Phi^+_N |(UU^T \otimes I)| \Phi^+_N \rangle|^2\\
&=& |\tr(UU^T)|^2/N^2\\
&=& (1-\gamma)^2,
\eeea
assuming $|\tr(UU^T)|=N(1-\gamma)$ for some $0 \le \gamma \le 1$.
Suppose we run $1/\epsilon^2$ iterations in step 1. Then the algorithm accepts $U$ with probability 
$(1-\gamma)^{2/\epsilon^2} \le e^{-2\gamma/\epsilon^2}$,
which is smaller than $1/3$
for any $\gamma>\epsilon^2$. Hence if the algorithm accepts accepts $U$ with probability
at least $1/3$, then $\gamma \le \epsilon^2$
and by lemma \ref{lem:ort} we have $D(U,\ON) \le \epsilon$. On the other hand, for any $U \in [\ON]$, 
obviously every iteration is successful and the algorithm always accepts it. So this algorithm $\epsilon$-tests $\ON$ by making $O(1/\epsilon^2)$ queries to $U$.
\end{proof}

Note that $\ket{\Phi^+_N}=\ket{v(\sigma_{\vec{0}})}$ can be efficiently prepared, and 
the measurement $\{\ketbra{\Phi^+_N}{\Phi^+_N}, I-\ketbra{\Phi^+_N}{\Phi^+_N}\}$
can be simulated by a finer measurement in the basis
$\{ \ket{v(\sigma_{\vx})}\}_{\vx \in \mathbb{Z}_4^n}$.
Hence, algorithm \ref{alg:ort} can be efficiently implemented.

\section{Testing Quantum Juntas}

Given a unitary operator $U$ acting on an $n$-qubit system, we might want to know if it only nontrivially acts on at most $k$ of the $n$ qubits. Formally, let $[n]=\{1,2,\dots,n\}$ be the set of indices of the qubits. For any $T \subseteq [n]$, we also use $T$ to denote the subsystem composed of qubits whose indices are in $T$.
We say that $U \in \UN$ only nontrivially acts on subsystem $T$ if $U=V_T \otimes I_{T^c}$ for some $V \in \mathbb{U}_{2^{|T|}}$, where $V_T$
and $I_{T^c}$ indicate that $V$ and $I$ act on the subsystem $T$ and $T^c=[n]\setminus T$ respectively. The set of quantum $k$-juntas is defined as
\beba
k\textrm{-Juntas} &=& \{U \in \UN: \exists T \subseteq [n], |T|=k, ~\textrm{s.t.}\\
&&U=V_T \otimes I_{T^c} \textrm{~for~some~} V \in \mathbb{U}_{2^k}\}.
\eeea

\begin{theorem}
$k$-Juntas can be $\epsilon$-tested with query
complexity $O(k \log (k/\epsilon)/ \epsilon^2)$.
\end{theorem}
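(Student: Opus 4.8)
The plan is to recast $k$-junta testing as a statement about the Pauli spectrum of $U$ and then sample from that spectrum at the cost of one query per sample. Recall from the Pauli decomposition subsection that preparing $\ket{v(U)}=(U\otimes I)\ME$ costs one query, and that measuring the $n$ Bell pairs constituting $\ME$ in the Bell basis returns an outcome $\vx\in\mathbb{Z}_4^n$ with probability $|\mu_{\vx}(U)|^2$; this measurement factorizes into $n$ two-qubit measurements and is therefore efficient. Writing $\supp(\vx)=\{j:x_j\neq 0\}$, the first step is the exact characterization: $U$ is a $k$-junta iff $\bigcup_{\vx:\,\mu_{\vx}(U)\neq 0}\supp(\vx)$ has at most $k$ elements. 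Indeed, if $U=V_T\otimes I_{T^c}$ with $|T|=k$, expanding $V$ in the Pauli basis of $T$ gives $\mu_{\vx}(U)=0$ whenever $\supp(\vx)\not\subseteq T$; conversely, if the union of supports lies in some $T$ with $|T|=k$, then $U=W\otimes I_{T^c}$ with $W=\sum_{\vy\in\mathbb{Z}_4^T}\mu_{(\vy,\vec 0)}(U)\sigma_{\vy}$, and $(W^{\dagger}W)\otimes I=U^{\dagger}U=I$ forces $W\in\mathbb{U}_{2^k}$.

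The second and crucial step is the robust version. For $T\subseteq[n]$ put $w_T(U)=\sum_{\supp(\vx)\subseteq T}|\mu_{\vx}(U)|^2$ and $U|_T=\sum_{\supp(\vx)\subseteq T}\mu_{\vx}(U)\sigma_{\vx}$. I claim that if $|T|=k$ and $w_T(U)\ge 1-\eta$, then $D(U,\,k\text{-Juntas})\le\sqrt{\eta}$. The point is that $U|_T=W\otimes I_{T^c}$ with $W=2^{-|T^c|}\tr_{T^c}(U)$ an average of sub-blocks of the isometry $U$, so $W^{\dagger}W\le I$, and $\|W\|^2=2^k w_T(U)\ge 2^k(1-\eta)$. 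Applying Lemma~\ref{lem:svd} in dimension $2^k$ yields $W'\in\mathbb{U}_{2^k}$ with $\|W-W'\|^2\le 2^k\eta$, and since the Pauli components of $U-U|_T$ and of $U|_T-(W'\otimes I_{T^c})$ sit on disjoint Pauli strings,
\be
\|U-(W'\otimes I_{T^c})\|^2=\|U-U|_T\|^2+\|U|_T-(W'\otimes I_{T^c})\|^2\le N\eta+N\eta=2N\eta,
\ee
hence $D(U,\,k\text{-Juntas})\le\sqrt{\eta}$. The contrapositive (with $\eta=\epsilon^2$) is what drives soundness: if $D(U,\,k\text{-Juntas})\ge\epsilon$, then every $T$ with $|T|=k$ has $w_T(U)\le 1-\epsilon^2$, i.e.\ a fresh spectrum sample $\vx$ satisfies $\supp(\vx)\not\subseteq T$ with probability at least $\epsilon^2$.

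The algorithm is then: draw $O(k\log(k/\epsilon)/\epsilon^2)$ independent spectrum samples $\vx^{(1)},\dots,\vx^{(m)}$ as above, set $S=\bigcup_t\supp(\vx^{(t)})$, and accept iff $|S|\le k$. Completeness is immediate and one-sided: if $U$ is a $k$-junta, every sampled support lies in a fixed $k$-set, so $|S|\le k$ and the test accepts with certainty. For soundness, assume $D(U,\,k\text{-Juntas})\ge\epsilon$ and process the samples one at a time, maintaining the running union $S$. Whenever $|S|\le k$, pad $S$ to a $k$-set $T\supseteq S$; by the second step the next sample has $\supp(\vx)\not\subseteq T$, hence $\supp(\vx)\not\subseteq S$, with probability at least $\epsilon^2$, and then $S$ acquires a new coordinate. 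Thus the number of samples needed for $|S|$ to exceed $k$ is stochastically dominated by a sum of $k+1$ independent geometric random variables with success probability $\epsilon^2$, of mean $(k+1)/\epsilon^2$; a coupon-collector estimate (Markov's inequality, or a high-probability Chernoff refinement of it) shows that $O(k\log(k/\epsilon)/\epsilon^2)$ samples make $|S|>k$, and the test reject, with probability at least $2/3$.

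The main obstacle is the robustness lemma of the second paragraph: one must verify carefully that $U|_T$ genuinely factors as $W\otimes I_{T^c}$ with $W$ a contraction (so Lemma~\ref{lem:svd} applies), and that the nearest unitary can be taken of the same tensor-with-identity form. Both follow by computing $\tr_{T^c}(\sigma_{\vx})$ (which vanishes unless $\supp(\vx)\subseteq T$), bounding each sub-block of $U$ in operator norm by $1$, and then running the singular-value argument of Lemma~\ref{lem:svd} on the $2^k$-dimensional block $W$ alone. Everything else is routine: the sampling primitive is already provided and efficient, the orthogonality of Pauli components makes the norm split exact, and the soundness reduces to the coupon-collector estimate once the "weight at least $\epsilon^2$ escapes every $k$-set" guarantee is in hand.
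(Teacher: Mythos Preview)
Your proposal is correct and follows the same high-level architecture as the paper: the same Pauli-sampling algorithm, the same robust characterization ``large Pauli mass on some $k$-set $T$ implies close to a junta on $T$'' proved via Lemma~\ref{lem:svd}, and the same one-sided completeness. Two technical points differ and are worth recording.

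First, your robustness bound is slightly sharper. You observe that $U-U|_T$ and $U|_T-(W'\otimes I_{T^c})$ are supported on disjoint Pauli strings, so the squared norms add exactly, yielding $D(U,k\text{-Juntas})\le\sqrt{\eta}$; the paper instead uses the triangle inequality $D(U,A)+D(A,\tilde V_T\otimes I_{T^c})$ and obtains $2\sqrt{\delta}$, which forces the threshold $\delta=\epsilon^2/4$ rather than your $\eta=\epsilon^2$. Your justification that $W=2^{-|T^c|}\tr_{T^c}(U)$ is a contraction is correct (it is a convex combination of the diagonal blocks $(\langle i|_{T^c}\otimes I)U(|i\rangle_{T^c}\otimes I)$, each of operator norm at most $1$); the paper reaches $\tilde A^\dagger\tilde A\le I$ by a different route, evaluating $\tr(U^\dagger U(\ketbra{\psi}{\psi}\otimes\rho))$ against the maximally mixed state on $T^c$ and using that the cross Pauli coefficients vanish.

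Second, your soundness analysis is genuinely different. The paper bounds the acceptance probability by a union bound over the at most $\binom{m}{k}$ choices of ``support-defining positions'' (indices where the running union grows), giving $\Pr[|W|\le k]\le\binom{m}{k}(1-\delta)^{m-k}$ and hence $m=O(k\log(k/\delta)/\delta)$. You instead argue sequentially: conditioned on $|S|\le k$, the next sample enlarges $S$ with probability at least $\epsilon^2$, so the hitting time for $|S|>k$ is stochastically dominated by a sum of $k+1$ independent $\mathrm{Geom}(\epsilon^2)$ variables. This is cleaner and in fact already gives $O(k/\epsilon^2)$ via Markov's inequality, so your invocation of the extra $\log(k/\epsilon)$ factor is harmless but unnecessary; the paper's combinatorial argument is what actually produces that logarithm.
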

\begin{proof}
Our basic idea is to consider the Pauli decomposition given by Eqs.(\ref{eq:pd}-\ref{eq:pdst}). For any $\vx \in \mathbb{Z}_4^n$,
let $\supp(\vx)=\{i:~x_i \neq 0\}$ and $|\vx|=|\supp(\vx)|$. 
Note that if $U=V_T \otimes I_{T^c}$ for some $T \subseteq [n]$, then $\mu_{\vx}(U)=0$ for any $\vx$ with $\supp(\vx) \not\subseteq T$. So for any $k$-junta $U$, if we measure the state $\ket{v(U)}$ in the basis $\{\ket{v(\sigma_{\vx})}\}_{x \in \mathbb{Z}_4^n}$, then we only obtain outcome $\vx$ satisfying $\supp(\vx) \subseteq T$ for some $T \subseteq [n]$ with $|T|=k$. The difficult part is to prove the converse is also true in the approximate sense. Namely, if we obtain outcome $\vx$ satisfying the same condition for sufficiently high probability, then $U$ is close to a $k$-junta.

Now we give the details. Consider the following testing algorithm:

\begin{algorithm}
\caption{~~~Testing $k$-Juntas}
\begin{tabular}{p{0.15 \columnwidth}p{0.75 \columnwidth}}
\textbf{Input:} & $U \in \UN$ is given as a blackbox.
$\epsilon>0$ is a proximity parameter. \\
\textbf{Steps:} & \\
& 1. Let $W=\varnothing$. \\
& 2. Repeat the following procedure $O(k \log ({k}/{\epsilon})/\epsilon^2)$ times:
\bit
\item Measurement the state $\ket{v(U)}$ in the basis
$\{\ket{v(\sigma_{\vx})}\}_{x \in \mathbb{Z}_4^n}$. Suppose we get the outcome $\vx$. 
\item Update $W \to W \cup \supp(\vx)$.
\item If $|W|>k$, then reject and quit. 
\eit \\
& 3. If none of the iterations rejects, then accept.\\
\end{tabular}
\label{alg:junta}
\end{algorithm}

Obviously, this algorithm accepts all $k$-juntas. It remains to show that if it accepts $U$ with probability at least $1/3$, then $D(U,k\textrm{-Juntas}) \le \epsilon.$ To prove this, it is enough to prove the following statements: 

Let $\mathbb{Z}_4^T=\{\vx \in \mathbb{Z}_4^n: \supp(\vx) \subseteq T\}$ for any $T \subseteq [n]$. Then
\ben
\item if $\sum\limits_{\vx \in \mathbb{Z}_4^T} |\mu_{\vx}(U)|^2 \le 1-\epsilon^2/4$ for any $T \subseteq [n]$, $|T| \le k$, then the algorithm accepts $U$ with probability at most $1/3$;
\item if 
$\sum\limits_{\vx \in \mathbb{Z}_4^T} |\mu_{\vx}(U)|^2
 \ge 1-\delta$ 
for some $T \subseteq [n]$, then there exists $\tilde{V} \in \mathbb{U}_{2^{|T|}}$ such that
$D(U,\tilde{V}_T \otimes I_{T^c}) \le 2\sqrt{\delta}.$ 
\een
The desired result follows immediately from the two statements.

To prove the first statement, consider the following classical game. Repeat the following procedure $m$ times: each time, we randomly sample a string from $\mathbb{Z}_4^n$ such that any $\vx \in \mathbb{Z}_4^n$ is chosen with probability $p_{\vx}=|\mu_{\vx}(U)|^2$. Let $\vx_1, \vx_2, \dots, \vx_m$ be the samples. We win the game if and only if 
\beba
|\bigcup\limits_{j=1}^m \supp(\vx_j)|\le k.
\eeea 
If for any $T \subseteq [n]$, $|T| \le k$, 
\beba
\Pr[\supp(\vx) \subseteq T]=\sum\limits_{\vx \in \mathbb{Z}_4^T} p_{\vx} \le 1-\delta,
\eeea
then what is the maximal probability of winning the game? 

Of course, we can give a simple upper bound $\binom{n}{k}(1-\delta)^m$, since there are $\binom{n}{k}$ possibilities of $W=\bigcup\limits_{j=1}^m \supp(\vx_j)$ such that $|W| \le k$, and when $W$ is fixed, $\Pr[\supp(\vx_j) \subseteq W]<1-\delta$ for each $j$. However, this bound is not good enough since it depends on $n$. To get a better bound, we introduce the following concept: given a sequence 
of samples $\vx_1,\vx_2,\dots,\vx_m$, we say that $j \in [m]$ is a \textit{support-defining} position if 
\beba
\supp(\vx_j) \not\subseteq \bigcup\limits_{i=1}^{j-1} \supp(\vx_i).
\eeea
Note that if $|W| \le k$, then there can be at most $k$ support-defining positions. Consequently, we can find $1 \le j_1<j_2<\dots<j_k \le m$ such that $W=\bigcup\limits_{i=1}^{k} \supp(\vx_{j_i}).$
No matter what $W$ is, for any $j \in [m]\setminus\{j_1,j_2,\dots,j_k\}$, we always have
$\Pr[\supp(\vx_j) \subseteq W] \le 1-\delta.$
Since there can be at most $\binom{m}{k}$ choices of $j_1,j_2,\dots,j_k$, by union bound we get
\beba
\Pr[|W| \le k] &\le \binom{m}{k}(1-\delta)^{m-k}\\
& \le m^k (1-\delta)^{m-k}\\
& \le e^{k \log m-\delta (m-k)}.
\eeea
By choosing $m=O(k\log (k/\delta) /\delta)$,
we can make this probability smaller than $1/3$.
Setting $\delta=\epsilon^2/4$ yields statement 1.

Now we prove the second statement. Let
\beba
A&=&\sum\limits_{\vx \in \mathbb{Z}_4^T} \mu_{\vx}(U) \sigma_{\vx}
=\tilde{A}_T \otimes I_{T^c},\\
B&=&U-A=\sum\limits_{\vx \not\in \mathbb{Z}_4^T} \mu_{\vx}(U) \sigma_{\vx}.
\eeea
Then we have
\beba
\|A\|^2=N\sum\limits_{\vx \in \mathbb{Z}_4^T} |\mu_{\vx}(U)|^2
\ge N(1-\delta),\\
\|B\|^2=N\sum\limits_{\vx \not\in \mathbb{Z}_4^T} |\mu_{\vx}(U)|^2
\le N\delta,
\eeea
which implies
\bea
\|\tilde{A}\|^2 &\ge & 2^{|T|}(1-\delta).
\label{eq:tildeAnorm}\\
\|B\| &\le & \sqrt{N\delta}.
\eea
Furthermore, let $\ket{\psi}$ be an arbitrary state on subsystem $T$ and let $\rho$ be the uniformly mixed state on subsystem $T^c$. Then we have
\beba
1&=&\tr(U^{\dagger}U (\ketbra{\psi}{\psi} \otimes \rho))\\
&=& \tr (A^{\dagger}A (\ketbra{\psi}{\psi} \otimes \rho))
+\tr( B^{\dagger}B (\ketbra{\psi}{\psi} \otimes \rho))\\
&&+\tr (A^{\dagger}B (\ketbra{\psi}{\psi} \otimes \rho))
+\tr (B^{\dagger}A (\ketbra{\psi}{\psi} \otimes \rho))\\
&=& \tr (\tilde{A}^{\dagger}\tilde{A} (\ketbra{\psi}{\psi})
+\tr( B^{\dagger}B (\ketbra{\psi}{\psi} \otimes \rho))\\
& \ge & \tr (\tilde{A}^{\dagger}\tilde{A} (\ketbra{\psi}{\psi}),
\eeea
where in the third step we use the fact
\beba
\mu_{\vx}(A^{\dagger}B)=\mu_{\vx}(B^{\dagger}A)=0,~~\forall
\vx \in \mathbb{Z}_4^T.
\eeea
Since $\ket{\psi}$ is arbitrary, we get
\beba
\tilde{A}^{\dagger}\tilde{A} \le I.
\label{eq:tildeAI}
\eeea
By Eqs.(\ref{eq:tildeAnorm}), (\ref{eq:tildeAI}) and lemma \ref{lem:svd}, we get there exists some $\tilde{V} \in \mathbb{U}_{2^{|T|}}$ such that 
\beba
D(\tilde{A},\tilde{V}) \le \sqrt{\delta/2},
\eeea 
and hence 
\beba
D(A, \tilde{V}_T \otimes I_{T^c}) \le \sqrt{\delta/2}.
\eeea
Meanwhile, we have
\beba
D(U,A) &\le \frac{1}{\sqrt{2N}} \|U-A\|\\
&=\frac{1}{\sqrt{2N}} \|B\|\\
&\le \sqrt{\delta/2}.
\eeea
As a result, 
\beba
D(U,\tilde{V}_T \otimes I_{T^c})
&\le D(U,A)+D(A, \tilde{V}_T \otimes I_{T^c})\\
&\le \sqrt{\delta/2}+\sqrt{\delta/2}\\
&=2\sqrt{\delta}.
\eeea
\end{proof}

Note that algorithm \ref{alg:junta} can also be efficiently implemented.

\section{Testing Clifford Group}

The Pauli group on $n$ qubits is defined as
the subgroup of $\UN$ generated by
$X_1 = X \otimes I \otimes \dots \otimes I,$
$Z_1 = Z \otimes I \otimes \dots \otimes I,$
$X_2 = I \otimes X \otimes \dots \otimes I,$
$Z_2 = I \otimes Z \otimes \dots \otimes I,$
$\dots,$
$X_n = I \otimes I \otimes \dots \otimes X,$
$Z_n = I \otimes I \otimes \dots \otimes Z.$
Equivalently, 
\beba
\Pn=\{i^k \sigma_{\vx}: k\in \mathbb{Z}_4, \vx \in \mathbb{Z}_4^n\}.
\eeea
The Clifford group on $n$ qubits is defined as
the normalizer of $\Pn$, i.e.
\beba
\Cn =\{U \in \UN: UhU^{\dagger} \in \Pn,~\forall h \in \Pn\}.
\eeea
Both Pauli group and Clifford group play important roles in quantum error correction \cite{Sho95,BDSW96,Ste96} and fault-tolerant quantum computation \cite{Sho96,Got98}.
 
Before stating our result about testing Clifford group, it is necessary to first present the following result about testing Pauli group.

\begin{lemma}[implicit in Ref.\cite{MO09}]
The Pauli group $\Pn$ can be $\epsilon$-tested with query complexity $O({1}/{\epsilon^2})$.
\end{lemma}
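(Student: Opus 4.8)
The plan is to recast membership in $\Pn$ as a statement about the Pauli decomposition and then reduce testing to distinguishing a point distribution from a spread-out one. Since $D$ identifies operators that differ by a global phase, $U\in[\Pn]$ exactly when $U=e^{i\theta}\sigma_{\vx}$ for some $\vx\in\mathbb{Z}_4^n$, equivalently when the decomposition $U=\sum_{\vx}\mu_{\vx}(U)\sigma_{\vx}$ has a single nonzero coefficient (necessarily of modulus $1$, since $\sum_{\vx}|\mu_{\vx}(U)|^2=1$ for unitary $U$). Combining Eq.~(\ref{eq:dip}) with Eq.~(\ref{eq:pdcof}) gives $D^2(U,\sigma_{\vx})=1-|\mu_{\vx}(U)|$, hence $D(U,\Pn)=\min_{\vx}\sqrt{1-|\mu_{\vx}(U)|}$. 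Therefore $U$ is $\epsilon$-far from $\Pn$ if and only if $|\mu_{\vx}(U)|\le 1-\epsilon^2$ for every $\vx$, and in that case $p_{\vx}:=|\mu_{\vx}(U)|^2\le(1-\epsilon^2)^2\le 1-\epsilon^2$ for every $\vx$.

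The testing algorithm I would use repeats the following $m=O(1/\epsilon^2)$ times: prepare $\ket{v(U)}=(U\otimes I)\ket{\Phi^+_N}$ using one query to $U$, and measure it in the orthonormal basis $\{\ket{v(\sigma_{\vx})}\}_{\vx\in\mathbb{Z}_4^n}$; by Eq.~(\ref{eq:pdst}) the outcome equals $\vx$ with probability $p_{\vx}$. It accepts if and only if all $m$ outcomes coincide. This makes $O(1/\epsilon^2)$ queries, and, exactly as remarked after Algorithm~\ref{alg:ort}, the state preparation and the Pauli-basis measurement can be carried out efficiently. If $U\in[\Pn]$ then $p_{\vx}$ is a point mass, so all outcomes agree and the algorithm accepts with probability $1$, giving one-sided error. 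If $U$ is $\epsilon$-far from $\Pn$, the acceptance probability is $\sum_{\vx}p_{\vx}^m\le\big(\max_{\vx}p_{\vx}\big)^{m-1}\sum_{\vx}p_{\vx}=\big(\max_{\vx}p_{\vx}\big)^{m-1}\le(1-\epsilon^2)^{m-1}\le e^{-\epsilon^2(m-1)}$, which is below $1/3$ once $m=O(1/\epsilon^2)$.

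I do not expect a real obstacle here; the only points worth checking carefully are the identity $D^2(U,\sigma_{\vx})=1-|\mu_{\vx}(U)|$ and the elementary bound $\sum_{\vx}p_{\vx}^m\le(\max_{\vx}p_{\vx})^{m-1}$ (which uses $\sum_{\vx}p_{\vx}=1$), both routine. An equivalent presentation runs $O(1/\epsilon^2)$ independent rounds, each drawing two outcomes $\vx_1,\vx_2$ and accepting iff $\vx_1=\vx_2$; one round then accepts an $\epsilon$-far $U$ with probability $\sum_{\vx}p_{\vx}^2\le\max_{\vx}p_{\vx}\le 1-\epsilon^2$, and the standard amplification gives the same query bound.
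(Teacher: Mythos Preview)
Your proof is correct and follows essentially the same approach as the paper: the same algorithm (measure $m=O(1/\epsilon^2)$ copies of $\ket{v(U)}$ in the Pauli basis and accept iff all outcomes agree), the same completeness argument, and the same reduction of soundness to bounding $\sum_{\vx}p_{\vx}^m$ given $\max_{\vx}p_{\vx}\le(1-\epsilon^2)^2$. Your bound $\sum_{\vx}p_{\vx}^m\le(\max_{\vx}p_{\vx})^{m-1}$ is in fact cleaner than the paper's case split on $\epsilon\lessgtr 1/\sqrt{2}$ and avoids its somewhat ad hoc two-term estimate.
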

\begin{proof}
Consider the following testing algorithm:

\begin{algorithm}
\caption{~~~Testing $\Pn$}
\begin{tabular}{p{0.15 \columnwidth}p{0.75 \columnwidth}}
\textbf{Input:} & $U \in \UN$ is given as a blackbox.
$\epsilon>0$ is a proximity parameter. \\
\textbf{Steps:} & \\
& Prepare $O({1}/{\epsilon^2})$ copies of $\ket{v(U)}$. Measure each copy in the basis $\{\ket{v(\sigma_{\vx})}\}_{\vx \in \mathbb{Z}_4^n}$. If all measurements get the same outcome, then accept. Otherwise, reject.
\end{tabular}
\label{alg:pn}
\end{algorithm}

Let $m=c/\epsilon^2$ be the number of copies of $\ket{v(U)}$ used in the algorithm. Since
\beba
\ket{v(U)} = \sum\limits_{\vx \in \mathbb{Z}_4^n}{\mu_{\vx}(U)\ket{v(x)}},
\eeea
the algorithm accepts $U$ with probability 
$\sum\limits_{\vx \in \mathbb{Z}_4^n}|\mu_{\vx}(U)|^{2m}$.
If $U \sim \sigma_{\vx_0}$ for some $\vx_0 \in \mathbb{Z}_4^n$, then $|\mu_{\vx_0}(U)|=1$ and $|\mu_{\vx}(U)|=0$ for any $\vx \neq \vx_0$, which implies that every measurement gets the outcome
$\vx_0$ and the algorithm accepts $U$ with certainty. On the other hand, if $D(U,\Pn)\ge\epsilon$, then by Eqs.(\ref{eq:dip}) and (\ref{eq:pdcof}) we get $|\mu_{\vx}(U)| \le 1-\epsilon^2$ for any $\vx
\in \mathbb{Z}_4^n$. Consequently,
\ben
\item if $0<\epsilon \le 1/\sqrt{2}$, then the probability of the algorithm accepting $U$ is
\beba
\sum\limits_{\vx \in \mathbb{Z}_4^n}|\mu_{\vx}(U)|^{2m}
&\le (1-\epsilon^2)^{2m}+ (\epsilon^2)^{2m} \\
&\le 2(1-\epsilon^2)^{2m} \\
&\le 2 e^{-2\epsilon^2 m} \\
&=2 e^{-2c}.
\eeea
By choosing a sufficiently large $c$, we can make this probability smaller than $1/3$. 
\item if $
1/\sqrt{2}<\epsilon<1$, then the probability of the algorithm accepting $U$ is
\beba
\sum\limits_{\vx \in \mathbb{Z}_4^n}|\mu_{\vx}(U)|^{2m}
&\le (1-\epsilon^2)^{2m}(\frac{1}{(1-\epsilon^2)^2}+1) \\
&\le 2(1-\epsilon^2)^{2m-2} \\
&\le (\frac{1}{2})^{2m-1}, 
\eeea
which is smaller than $1/3$ as long as $m \ge 2$. 
\een
Overall, this algorithm $\epsilon$-tests $\Pn$ with query complexity $O(1/\epsilon^2)$.
\end{proof}

Now we turn to the testing of Clifford group. Note that if $U \in \Cn$, then $D(U\sigma_{\vx}U^{\dagger}, \Pn)=0$ for any $\vx \in \mathbb{Z}_4^n$. The following lemma shows that the converse is also true in the approximate sense.
\begin{lemma}
If $D(U\sigma_{\vx}U^{\dagger}, \Pn) \le \delta$ for any $\vx \in
\mathbb{Z}_4^n$, then $D(U,\Cn) \le 4\delta$.
\label{lem:clif}
\end{lemma}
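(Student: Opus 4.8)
\emph{Overview.} The plan is to use the hypothesis to reconstruct an \emph{exact} Clifford operator $V$ out of the approximately Pauli--permuting action of $U$, and then to show that $U$ agrees with $V$ up to a Pauli correction and a global phase. Throughout we may assume $\delta$ is smaller than some absolute constant, since otherwise $4\delta\ge 1\ge D(U,\Cn)$ and there is nothing to prove.

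\textbf{Step 1: extracting a symplectic map $f$.} For each $\vx\in\mathbb{Z}_4^n$, since $D(U\sigma_{\vx}U^{\dagger},\Pn)\le\delta$ and $D$ ignores global phases, there is a (unique, for $\delta$ small) label $f(\vx)\in\mathbb{Z}_4^n$ with $D(U\sigma_{\vx}U^{\dagger},\sigma_{f(\vx)})\le\delta$; because $U\sigma_{\vx}U^{\dagger}$ and $\sigma_{f(\vx)}$ are Hermitian the optimal phase in (\ref{eq:D1}) is a sign, so $\|U\sigma_{\vx}U^{\dagger}-s_{\vx}\sigma_{f(\vx)}\|\le\sqrt{2N}\,\delta$ for some $s_{\vx}\in\{-1,1\}$ (use $\|sA-B\|^2=2N\,D^2(A,B)$ here). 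I would then show, in order: (i) $f$ is a bijection of $\mathbb{Z}_4^n$ --- if $f(\vx)=f(\vy)$ with $\vx\ne\vy$ then $D(\sigma_{\vx},\sigma_{\vy})=D(U\sigma_{\vx}U^{\dagger},U\sigma_{\vy}U^{\dagger})\le 2\delta<1$, contradicting that distinct Pauli operators lie at distance $1$ (by (\ref{eq:dip})); (ii) $f$ is a homomorphism of $(\mathbb{Z}_4^n,\oplus)$ --- comparing $U\sigma_{\vx\oplus\vy}U^{\dagger}$ with $(U\sigma_{\vx}U^{\dagger})(U\sigma_{\vy}U^{\dagger})$ via property 4 of $D$ and the relation $\sigma_{\vx}\sigma_{\vy}=i^{\vx\odot\vy}\sigma_{\vx\oplus\vy}$ forces $\sigma_{f(\vx\oplus\vy)}$ and $\sigma_{f(\vx)\oplus f(\vy)}$ to coincide; (iii) $f$ is symplectic, i.e.\ $\sigma_{f(\vx)}$ and $\sigma_{f(\vy)}$ commute exactly when $\sigma_{\vx},\sigma_{\vy}$ do. For (iii), $D$ is useless --- it cannot detect the sign separating commuting from anticommuting Paulis --- so I would argue with the genuine operators: conjugation by $U$ preserves commutators exactly, and substituting $U\sigma_{\vx}U^{\dagger}=s_{\vx}\sigma_{f(\vx)}+E_{\vx}$ with $\|E_{\vx}\|\le\sqrt{2N}\,\delta$ (and $\|E_{\vx}\|_{\mathrm{op}}\le 2$ to control cross terms) into $[U\sigma_{\vx}U^{\dagger},U\sigma_{\vy}U^{\dagger}]$ forces the commutation type of $\sigma_{f(\vx)},\sigma_{f(\vy)}$ to match that of $\sigma_{\vx},\sigma_{\vy}$.

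\textbf{Step 2: building $V$ and the correcting Pauli $P$.} Now $(\mathbb{Z}_4^n,\oplus)$ is elementary abelian of order $4^n$, carrying the nondegenerate symplectic form given by Pauli commutation, and it is standard that every symplectic automorphism is realized by conjugation by some Clifford operator. So choose $V\in\Cn$ with $V\sigma_{\vx}V^{\dagger}=\eta_{\vx}\sigma_{f(\vx)}$, $\eta_{\vx}\in\{-1,1\}$. Put $W:=V^{\dagger}U$. Conjugating $U\sigma_{\vx}U^{\dagger}=s_{\vx}\sigma_{f(\vx)}+E_{\vx}$ by $V^{\dagger}$ (which preserves $\|\cdot\|$) and using $V^{\dagger}\sigma_{f(\vx)}V=\eta_{\vx}\sigma_{\vx}$ gives $W\sigma_{\vx}W^{\dagger}=c_{\vx}\sigma_{\vx}+F_{\vx}$ with $c_{\vx}:=s_{\vx}\eta_{\vx}\in\{-1,1\}$ and $\|F_{\vx}\|\le\sqrt{2N}\,\delta$. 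The product computation from Step 1(ii), now run with operators, forces $c_{\vx\oplus\vy}=c_{\vx}c_{\vy}$, so $c$ is a character of $(\mathbb{Z}_4^n,\oplus)$; by nondegeneracy of the symplectic form there is a Pauli $P$ with $P\sigma_{\vx}P^{\dagger}=c_{\vx}\sigma_{\vx}$ for all $\vx$.

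\textbf{Step 3: conclusion.} Set $W':=P^{\dagger}W=P^{\dagger}V^{\dagger}U$; then $W'\sigma_{\vx}W'^{\dagger}=\sigma_{\vx}+E'_{\vx}$ with $\|E'_{\vx}\|=\|F_{\vx}\|\le\sqrt{2N}\,\delta$, so $W'$ nearly commutes with every Pauli. Expanding $W'=\sum_{\vz}\mu_{\vz}(W')\sigma_{\vz}$ and noting that $\sigma_{\vz}\sigma_{\vx}-\sigma_{\vx}\sigma_{\vz}$ vanishes when $\sigma_{\vz},\sigma_{\vx}$ commute and equals $2\sigma_{\vz}\sigma_{\vx}$ (of norm $2\sqrt{N}$) otherwise, with these $\sigma_{\vz}\sigma_{\vx}$ mutually orthogonal, yields $\sum_{\vz:\,\{\sigma_{\vz},\sigma_{\vx}\}=0}|\mu_{\vz}(W')|^2\le\delta^2/2$ for each $\vx$; averaging over $\vx$ and using that exactly half of the Paulis anticommute with any fixed $\sigma_{\vz}\ne I$ gives $\sum_{\vz\ne\vec{0}}|\mu_{\vz}(W')|^2\le\delta^2$, hence $|\mu_{\vec{0}}(W')|\ge\sqrt{1-\delta^2}$ and $D^2(W',I)=1-|\mu_{\vec{0}}(W')|\le\delta^2$ by (\ref{eq:dip}). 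Since $U=VPW'$ and $VP\in\Cn$, property 2 of $D$ gives $D(U,\Cn)\le D(U,VP)=D(W',I)\le\delta\le 4\delta$; the factor $4$ leaves ample slack to absorb any loss in the constants of Step 1.

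\emph{Main obstacle.} The one genuinely delicate point is Step 1(iii) (and its companion in Step 2, that $c$ is a character): the distance $D$ is blind to the global phase that distinguishes commuting from anticommuting Paulis, so these steps cannot be carried out with the metric $D$ alone, and one must descend to the actual operators $U\sigma_{\vx}U^{\dagger}$ and their Pauli expansions and carefully control how the error term $E_{\vx}$ propagates through products and commutators. Everything else is bookkeeping with the triangle inequality, the elementary properties of $D$ from Section~2, and standard facts about the Clifford and symplectic groups.
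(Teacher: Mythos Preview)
Your proposal is correct and follows the same route as the paper's Appendix~A: extract the nearest-Pauli labeling (your $f$, the paper's $F$) together with signs, verify that it is a $\oplus$-homomorphism preserving commutation type, realize it by a Clifford, and then bound the distance by a Pauli-averaging argument. The execution differs in two places. First, the paper asserts in one step that the map $\sigma_{\vx}\mapsto \gamma_{\vx}\sigma_{F(\vx)}$ (with the signs $\gamma_{\vx}=e^{i\Theta(\vx)}$ already read off from $U$) is a Pauli-group automorphism and hence realized by a single Clifford $C$; you instead split this as $VP$, with $V$ realizing only the symplectic part and the Pauli $P$ correcting the signs, which makes the sign-consistency step ($c$ is a character) explicit rather than implicit. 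Second, for the final distance bound the paper applies the twirling identity $\tfrac{1}{N}\sum_{\vx}\sigma_{\vx}A\sigma_{\vx}=\tr(A)\,I$ to $A=U^{\dagger}C$ to obtain $|\tr(U^{\dagger}C)|^2\ge N^2(1-\delta^2)$, whereas your commutator-averaging computation $\sum_{\vz\ne\vec{0}}|\mu_{\vz}(W')|^2\le\delta^2$ is the same second-moment identity phrased on the Choi side; both give $D(U,\Cn)\le\delta$, well within the stated $4\delta$.
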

\begin{proof}
See Appendix A.
\end{proof}
Therefore, in order to test whether $U$ is in $[\Cn]$ or far from it, it is sufficient to test whether $U \sigma_{\vx} U^{\dagger}$ is in $[\Pn]$ or far from it
for every $\vx \in \mathbb{Z}_4^n$. Note that  $\sigma_{\vx}$ can be generated by $X_1,Z_1,\dots,X_n,Z_n$. So $U \sigma_{\vx} U^{\dagger}$ can be generated by $UX_1U^{\dagger}, UZ_1U^{\dagger},\dots,UX_nU^{\dagger}, UZ_nU^{\dagger}$. For example, if $g=X_1Z_2X_nZ_n \in \Pn$, then we have $UgU^{\dagger}=(UX_1U^{\dagger})(UZ_2U^{\dagger})(UX_nU^{\dagger})(UZ_nU^{\dagger})$. This means that $U\sigma_{\vx}U^{\dagger}$ is close to $\Pn$ for any $\vx$ if and only if $UX_jU^{\dagger}$ and $UZ_jU^{\dagger}$ are close to $\Pn$ for every $j$. This suggests that we can apply the algorithm \ref{alg:pn} to each $UX_jU^{\dagger}$ and $UZ_jU^{\dagger}$, and accepts 
$U$ if and only if every subtest accepts with an appropriate proximity parameter. An algorithm based on a similar idea was given in Ref.\cite{Low09}.
However, this approach has the drawback that it needs to execute $2n$ subtests 
on $UX_jU^{\dagger}$'s and $UZ_jU^{\dagger}$'s, and consequently its query complexity depends on $n$. Here we present a better algorithm whose query complexity only depends on the proximity parameter $\epsilon$.

One basic idea is that we still test the distance between $U\sigma_{\vx}U^{\dagger}$ and $\Pn$, but we only do this test for a few random $\vx$'s. The key observation is that $U \Pn U^{\dagger}$ is a group. This group structure ensures that, $U\sigma_{\vx}U^{\dagger}$ is close to $\Pn$ for any $\vx$ if and only if $U\sigma_{\vx}U^{\dagger}$ is close to $\Pn$ for a sufficiently large fraction of $\vx$. Specifically,
\begin{lemma}
If at least $2/3$ fraction of $\vx \in \mathbb{Z}_4^n$ satisfies
$D(U\sigma_{\vx}U^{\dagger}, \Pn) \le \delta$, then for any $\vy \in \mathbb{Z}_4^n$, we have $D(U\sigma_{\vy}U^{\dagger}, \Pn) \le 2\delta$. 
\label{lem:good}
\end{lemma}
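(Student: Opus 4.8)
The plan is to run a BLR-style self-correction argument, using the group structure of $\Pn$ together with the subadditivity of $D$ under products (property 4). First I would fix the ``good'' set $G=\{\vx\in\mathbb{Z}_4^n:\ D(U\sigma_{\vx}U^{\dagger},\Pn)\le\delta\}$, so the hypothesis reads $|G|\ge\tfrac{2}{3}|\mathbb{Z}_4^n|$. The key algebraic observation is that $\mathbb{Z}_4^n$ equipped with $\oplus$ is an abelian group (it is the Pauli group modulo phases), so for every fixed $\vy$ the map $\vx\mapsto\vx\oplus\vy$ is a bijection of $\mathbb{Z}_4^n$ and hence preserves the uniform measure. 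Therefore both $\{\vx:\vx\in G\}$ and $\{\vx:\vx\oplus\vy\in G\}$ have density at least $2/3$, and since $\tfrac{2}{3}+\tfrac{2}{3}>1$ a union bound produces a $\vx$ (in fact a $1/3$ fraction of them) with $\vx\in G$ \emph{and} $\vx\oplus\vy\in G$.

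Next I would combine these two near-Pauli elements. Writing $\vz=\vx\oplus\vy$, we have $\sigma_{\vy}=\sigma_{\vx\oplus\vz}=i^{-\vx\odot\vz}\sigma_{\vx}\sigma_{\vz}$, so $U\sigma_{\vy}U^{\dagger}=i^{-\vx\odot\vz}(U\sigma_{\vx}U^{\dagger})(U\sigma_{\vz}U^{\dagger})$; since $D$ is invariant under global phases, $D(U\sigma_{\vy}U^{\dagger},\Pn)=D\big((U\sigma_{\vx}U^{\dagger})(U\sigma_{\vz}U^{\dagger}),\Pn\big)$. Now pick $P_1,P_2\in\Pn$ with $D(U\sigma_{\vx}U^{\dagger},P_1)\le\delta$ and $D(U\sigma_{\vz}U^{\dagger},P_2)\le\delta$. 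Because $\Pn$ is closed under multiplication, $P_1P_2\in\Pn$, and property 4 gives
\bea
D(U\sigma_{\vy}U^{\dagger},\Pn) &\le& D\big((U\sigma_{\vx}U^{\dagger})(U\sigma_{\vz}U^{\dagger}),\,P_1P_2\big)\\
&\le& D(U\sigma_{\vx}U^{\dagger},P_1)+D(U\sigma_{\vz}U^{\dagger},P_2)\\
&\le& 2\delta,
\eea
which is exactly the claimed bound.

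I do not expect a real obstacle here: the proof is only a few lines once the pieces are assembled. The two points that need care are (i) bookkeeping the phase $i^{\vx\odot\vz}$ so that the product of the two conjugated Paulis is correctly identified with $U\sigma_{\vy}U^{\dagger}$ up to a global phase (harmless because $D$ quotients out phases), and (ii) observing that $\vx\mapsto\vx\oplus\vy$ is measure-preserving on $\mathbb{Z}_4^n$, which is precisely what makes the $2/3$ hypothesis enough for the union bound. Both are routine, so the main content is really just recognizing that the group structure of $\Pn$ lets local agreement propagate to every $\vy$.
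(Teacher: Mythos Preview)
Your proof is correct and essentially identical to the paper's. The paper phrases the existence step as a pigeonhole argument---pairing each $\vx$ with $\vx\oplus\vy$ and noting that at least one pair must contain two $\delta$-good elements---while you phrase it as a density/union-bound argument using that $\vx\mapsto\vx\oplus\vy$ is a bijection; these are the same counting observation, and the subsequent product step via property~4 matches the paper exactly.
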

\begin{proof}
We prove this result by using the pigeonhole principle. A $\vx \in \mathbb{Z}_4^n$ is said to be \emph{$\delta$-good} if $D(U\sigma_{\vx}U^{\dagger}, \Pn) \le \delta$. Fix a $\vy \in \mathbb{Z}_4^n$. Divide $\mathbb{Z}_4^n$ into $4^n/2$ pairs: each pair consists of $\vx$ and $\vx \oplus \vy$ for some $\vx \in \mathbb{Z}_4^n$. Since at least $2/3$ fraction of $\mathbb{Z}_4^n$ is $\delta$-good, at least one pair contains two $\delta$-good vectors. Let $\vz$ and $\vz \oplus \vy$ be any such a pair. Then 
\beba
D(U\sigma_{\vy}U^{\dagger},\Pn)
&=&D(U\sigma_{\vz}U^{\dagger}\cdot
U(\sigma_{\vz}\sigma_{\vy})U^{\dagger}, \Pn)\\
&=&D(U\sigma_{\vz}U^{\dagger}\cdot U\sigma_{\vz \oplus \vy}U^{\dagger}, \Pn)\\
& \le & D(U\sigma_{\vz}U^{\dagger}, \Pn)+
D(U\sigma_{\vz \oplus \vy}U^{\dagger},\Pn)\\
& \le & 2\delta.
\eeea
\end{proof}
Therefore, we can run the algorithm \ref{alg:pn} on only a few random $\vx \in \mathbb{Z}_4^n$, and then estimate the fraction of good $\vx$'s with sufficiently good precision and hence estimate the distance between $U$ and $\Cn$. Now we give more details.

\begin{theorem}
If both $U$ and $U^{\dagger}$ can be accessed, then the Clifford group $\Cn$ can be $\epsilon$-tested with query complexity $O({1}/{\epsilon^2})$ 
\end{theorem}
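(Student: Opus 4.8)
The plan is to build the Clifford test out of the Pauli test (Algorithm~\ref{alg:pn}) combined with Lemmas~\ref{lem:clif} and~\ref{lem:good}, together with a randomized outer loop that estimates the fraction of ``good'' $\vx$. First I would fix an appropriate internal proximity parameter for the Pauli subtest, say $\delta=\epsilon/8$, so that by Lemma~\ref{lem:clif}, if $D(U\sigma_{\vx}U^\dagger,\Pn)\le\delta$ for \emph{all} $\vx$ then $D(U,\Cn)\le4\delta=\epsilon/2\le\epsilon$, and by Lemma~\ref{lem:good}, if at least $2/3$ of the $\vx$ are $\delta$-good then in fact \emph{every} $\vx$ is $2\delta$-good, giving $D(U,\Cn)\le 8\delta=\epsilon$. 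So the real job is a gap-version promise: distinguish ``every $\vx$ is $0$-good'' (the $U\in[\Cn]$ case) from ``fewer than $2/3$ of the $\vx$ are $\delta$-good'' (which, by the contrapositive of the above, is implied by $D(U,\Cn)>\epsilon$).

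The algorithm I would write down: repeat $O(1)$ times an outer iteration that (i) picks $\vx\in\mathbb{Z}_4^n$ uniformly at random, (ii) prepares $U\sigma_{\vx}U^\dagger$ as a blackbox by composing the oracles $U$, $\sigma_{\vx}$ (which is a product of the generators $X_j,Z_j$), and $U^\dagger$, and (iii) runs Algorithm~\ref{alg:pn} on this operator with proximity parameter $\delta=\Theta(\epsilon)$; accept iff all outer iterations accept. Each outer iteration costs $O(1/\delta^2)=O(1/\epsilon^2)$ queries to $U$ and $U^\dagger$, and there are $O(1)$ of them, so the total query complexity is $O(1/\epsilon^2)$. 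For completeness: if $U\in[\Cn]$ then every $U\sigma_{\vx}U^\dagger\in[\Pn]$, so every Pauli subtest accepts with certainty and the overall test accepts with probability $1$ — one-sided error. For soundness: suppose $D(U,\Cn)>\epsilon$; then by the two lemmas the fraction of $\delta$-good $\vx$ is less than $2/3$, so a random $\vx$ is $\delta$-bad with probability $>1/3$; since Algorithm~\ref{alg:pn} rejects a $\delta$-bad $U\sigma_{\vx}U^\dagger$ with probability at least $2/3$ (taking its internal constant large enough), a single outer iteration rejects with probability at least $(1/3)(2/3)=2/9$, and amplifying with a constant number of outer iterations drives the acceptance probability below $1/3$.

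The main obstacle is managing the chain of constant blow-ups so the final proximity is exactly $\epsilon$: the Pauli subtest's ``$\delta$-far'' guarantee, the factor $2$ in Lemma~\ref{lem:good}, and the factor $4$ in Lemma~\ref{lem:clif} all compound, so I must choose the internal $\delta$ as a small constant times $\epsilon$ and verify the contrapositive direction carefully — namely that $D(U,\Cn)>\epsilon$ forces ``not all $\vx$ are $2\delta$-good,'' hence (by Lemma~\ref{lem:good}) ``fewer than $2/3$ of $\vx$ are $\delta$-good,'' which is the event the sampling loop detects. A secondary point worth stating explicitly is why $U\Pn U^\dagger$ being a group is what makes the random-$\vx$ sampling sound: this is exactly the content of Lemma~\ref{lem:good}, and I would remark that without it one would be forced to test all $2n$ generators $UX_jU^\dagger,UZ_jU^\dagger$ as in Ref.~\cite{Low09}, incurring an $n$-dependence. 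A final minor subtlety is the cost accounting for implementing the $U\sigma_{\vx}U^\dagger$ oracle inside Algorithm~\ref{alg:pn}: each invocation uses one query to $U$, one to $U^\dagger$, and $O(n)$ cheap Pauli gates, so the $O(1/\epsilon^2)$ invocations translate to $O(1/\epsilon^2)$ queries to the $U$/$U^\dagger$ blackboxes, with the Pauli gates contributing only to gate complexity, not query complexity.
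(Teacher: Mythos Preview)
Your proposal is correct and matches the paper's proof essentially line for line: the same algorithm (constant-many outer iterations, each picking a uniformly random $\vx$ and running Algorithm~\ref{alg:pn} on $U\sigma_{\vx}U^{\dagger}$ with proximity parameter $\epsilon/8$), the same completeness argument, and the same soundness chain via Lemmas~\ref{lem:good} and~\ref{lem:clif}. Your per-iteration acceptance bound of $1-2/9=7/9$ when fewer than $2/3$ of the $\vx$ are good is exactly the paper's $(p+(1-p)/3)\le 7/9$ bound, just phrased contrapositively.
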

\begin{proof}
Consider the following testing algorithm:

\begin{algorithm}
\caption{~~~Testing $\Cn$}
\begin{tabular}{p{0.15 \columnwidth}p{0.75 \columnwidth}}
\textbf{Input:} & $U,U^{\dagger} \in \UN$ are given as blackboxes.
$\epsilon>0$ is a proximity parameter. \\
\textbf{Steps:} & \\
& 1. Repeat the following procedure $O(1)$ times: \\
& \bit
\item Pick a $\vx \in \mathbb{Z}_4^n$ uniformly at random. 
\item Run the algorithm \ref{alg:pn} on $U\sigma_{\vx}U^{\dagger}$
with the proximity parameter $\epsilon/8$, where the oracle $U\sigma_{\vx}U^{\dagger}$ is simulated by concatenating $U^{\dagger}$, $\sigma_{\vx}$ and $U$. 
\eit\\
& 2. If all iterations accept, then accept. Otherwise, reject.
\end{tabular}
\label{alg:cn}
\end{algorithm}

If $U \in [\Cn]$, then $U\sigma_{\vx}U^{\dagger} \in [\Pn]$ for any $\vx$ and hence the algorithm always accepts $U$. On the other hand, we will show that if the algorithm accepts $U$ with probability greater than $1/3$, then at least $2/3$ fraction of $\vx \in \mathbb{Z}_4^n$ is $\epsilon/8$-good, and then by lemma \ref{lem:clif} and lemma \ref{lem:good}, we get $D(U,\Cn) \le \epsilon$. Suppose the fraction of $\epsilon/8$-good $\vx$ is $p$. Then if we pick such a good $\vx$, then algorithm 3 accepts $U\sigma_{\vx}U^{\dagger}$ with probability at most $1$; otherwise, the algorithm 3 accepts $U\sigma_{\vx}U^{\dagger}$ with probability at most $1/3$.  Hence, the probability that algorithm \ref{alg:cn} accepts $U$ is at most $(p+(1-p)/3)^C$, where $C=O(1)$ is the number of iterations. If $p \le 2/3$, then this probability is at most $(7/9)^C$ which is smaller than $1/3$ as long as $C \ge 5$.
\end{proof}

Note that algorithm \ref{alg:pn} is efficiently implementable, and hence so is algorithm
\ref{alg:cn}.

\section{Testing Any Finite Subset}
So far we have studied the testing of several special subsets of $\UN$. In this section, we will present an algorithm that tests any finite subset of $\UN$ and thus also give an upper bound on the query complexity.

\begin{theorem}
Suppose $\S=\{W_1,W_2,\dots,W_M\} \subset \UN$.
If $\max\limits_{1\le i<j \le M} |\ip{W_i}{W_j}|=N(1-\delta)$,
then $\S$ can be $\epsilon$-tested with query complexity
$O(\dfrac{\log M}{\min\{\epsilon^2,\delta\}})$.
\label{thm:general}
\end{theorem}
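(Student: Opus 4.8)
The plan is to reduce the whole problem to a single projective measurement on $t$ copies of the Choi--Jamio{\l}kowski state $\ket{v(U)}$, with $t=O(\log M/\min\{\epsilon^2,\delta\})$, exploiting the fact that tensoring the states $\ket{v(W_i)}$ drives them toward mutual orthogonality. First I would restate the two cases in terms of these states: by Eqs.~(\ref{eq:cjuv}) and (\ref{eq:dip}), the vectors $\ket{\phi_i}:=\ket{v(W_i)}$ are unit vectors with $|\bracket{\phi_i}{\phi_j}|=\frac1N|\ip{W_i}{W_j}|\le 1-\delta$ for $i\neq j$, and $D^2(U,W_i)=1-|\bracket{v(U)}{\phi_i}|$. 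Hence, if $U$ has property $\S$, then $U=e^{i\theta}W_{i_0}$ for some $i_0,\theta$, so $\ket{v(U)}=e^{i\theta}\ket{\phi_{i_0}}$; while if $D(U,\S)\ge\epsilon$, then $|\bracket{v(U)}{\phi_i}|\le 1-\epsilon^2$ for every $i$.

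The algorithm I would propose is: let $\mathcal{V}_t=\span\{\ket{\phi_i}^{\otimes t}:1\le i\le M\}\subseteq(\CNN)^{\otimes t}$, prepare $\ket{v(U)}^{\otimes t}$ using $t$ queries to $U$, perform the two-outcome measurement $\{P_{\mathcal{V}_t},\,I-P_{\mathcal{V}_t}\}$, and accept if and only if the first outcome occurs. (A naive test of $U$ against each $W_i$ in turn would use $O(M\log M/\epsilon^2)$ queries; the point of the tensor power is to fuse the $M$ subtests into one, trading $M$ for $\log M$.) Completeness and one-sidedness are immediate: if $U\in[\S]$ then $\ket{v(U)}^{\otimes t}=e^{it\theta}\ket{\phi_{i_0}}^{\otimes t}\in\mathcal{V}_t$, so the test accepts with probability $1$.

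For soundness I would bound the acceptance probability $\|P_{\mathcal{V}_t}\ket{v(U)}^{\otimes t}\|^2$ in the $\epsilon$-far case. Let $G$ be the $M\times M$ Gram matrix of $\{\ket{\phi_i}^{\otimes t}\}$, with $G_{ii}=1$ and $|G_{ij}|=|\bracket{\phi_i}{\phi_j}|^t\le(1-\delta)^t$ for $i\neq j$. By Gershgorin's theorem $\lambda_{\min}(G)\ge 1-(M-1)(1-\delta)^t$, so (once this is positive) the $\ket{\phi_i}^{\otimes t}$ are linearly independent and $P_{\mathcal{V}_t}=\sum_{i,j}(G^{-1})_{ij}\ketbra{\phi_i^{\otimes t}}{\phi_j^{\otimes t}}$. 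Writing $a_i:=\bracket{\phi_i^{\otimes t}}{v(U)^{\otimes t}}$, this gives $\|P_{\mathcal{V}_t}\ket{v(U)}^{\otimes t}\|^2=a^\dagger G^{-1}a\le\|a\|^2/\lambda_{\min}(G)$, and since $|a_i|=|\bracket{v(U)}{\phi_i}|^t\le(1-\epsilon^2)^t$ we get $\|a\|^2\le M(1-\epsilon^2)^{2t}$. Choosing $t$ large enough that $(M-1)(1-\delta)^t\le 1/2$ (which requires $t=\Omega(\log M/\delta)$) makes the denominator at least $1/2$, and $t$ large enough that $M(1-\epsilon^2)^{2t}\le 1/6$ (which requires $t=\Omega(\log M/\epsilon^2)$) makes the numerator at most $1/6$; together the acceptance probability is at most $1/3$. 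Both constraints are met by $t=O(\log M/\min\{\epsilon^2,\delta\})$, the claimed query complexity.

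The step I expect to be the main obstacle is precisely this control of $\lambda_{\min}(G)$, and it is where $\delta$ enters separately from $\epsilon$: a priori the $M$ vectors $\ket{\phi_i}$ may be far from orthogonal, so $\mathcal{V}_t$ can have a badly conditioned basis and $\|P_{\mathcal{V}_t}\ket{v(U)}^{\otimes t}\|^2$ can remain close to $1$ even when $U$ is $\epsilon$-far from every $W_i$; one must tensor enough copies that $(1-\delta)^t$ beats $1/M$, which is what forces the $1/\delta$ term. I would also dispose of the degenerate cases: $\delta=0$ means some $W_i\sim W_j$ (equality in Cauchy--Schwarz), in which case one passes to a maximal pairwise-inequivalent subfamily (only decreasing $M$), and $\epsilon\ge 1$ is vacuous since $D\le 1$ on unitaries. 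Finally, $P_{\mathcal{V}_t}$ is the projector onto the span of $M$ arbitrary tensor-power states, which need not be efficiently implementable — hence only a query bound is claimed.
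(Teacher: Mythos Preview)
Your proposal is correct and uses the same algorithm as the paper: prepare $t$ copies of the Choi--Jamio{\l}kowski state and project onto the span of the tensor powers $\ket{v(W_i)}^{\otimes t}$, with completeness immediate and soundness coming from the fact that both the pairwise overlaps among the $\ket{v(W_i)}$ and the overlaps $|\bracket{v(U)}{v(W_i)}|$ decay exponentially in $t$. Where you differ is in the bookkeeping of the soundness bound. The paper picks an orthonormal basis $\ket{\psi_i}=\sum_j\lambda_{i,j}\ket{v(W_j)}^{\otimes K}$ of the span and shows directly, from $(1-\delta)^K\le 1/(5M)$, that $\sum_{i,j}|\lambda_{i,j}|^2\le 2M$; it then expands $\tr(\Pi_K\,\ketbra{v(U)}{v(U)}^{\otimes K})$ in these coefficients and bounds term by term. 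You instead write the projector via the Gram matrix as $P=\sum_{i,j}(G^{-1})_{ij}\ketbra{\phi_i^{\otimes t}}{\phi_j^{\otimes t}}$, bound $\lambda_{\min}(G)$ by Gershgorin, and conclude $a^\dagger G^{-1}a\le\|a\|^2/\lambda_{\min}(G)$. These are two sides of the same coin (your $G^{-1}$ is exactly $\Lambda^\dagger\Lambda$ for the paper's coefficient matrix $\Lambda$), and they yield the same threshold $t=\Theta(\log M/\min\{\epsilon^2,\delta\})$; your route is a bit more streamlined and makes the role of $\delta$ (via the conditioning of $G$) more transparent, while the paper's route avoids invoking Gershgorin and stays closer to elementary inequalities.
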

\begin{proof}
Our basic idea is as follows. Suppose $U$ is either in $[\S]$ or far from $[\S]$. Then $\ket{v(U)}$ is either in the subspace spanned by $\ket{v(W_1)},\ket{v(W_2)}, \dots, \ket{v(W_M)}$, or has bounded projection onto this subspace. Since $|\bracket{\varphi^{\otimes k}}{\psi^{\otimes k}}|
=|\bracket{\varphi}{\psi}|^k$, we would expect that $\ket{v(U)}^{\otimes k}$ 
is either in the subspace spanned by $\ket{v(W_1)}^{\otimes k},
\ket{v(W_2)}^{\otimes k}, \dots, \ket{v(W_M)}^{\otimes k}$, or 
has exponentially small projection onto this subspace.
So by making a projective measurement on $\ket{v(U)}^{\otimes k}$
we should be able to distinguish the two cases. In the next, we are going to make this argument rigorous.

Consider the following testing algorithm:\\

\begin{algorithm}
\caption{~~~Testing $\S=\{W_1,W_2,\dots,W_M\} \subset \UN$}
\begin{tabular}{p{0.15 \columnwidth}p{0.75 \columnwidth}}
\textbf{Input:} & $U \in \UN$ is given as a blackbox.
$\epsilon>0$ is a proximity parameter. And suppose
$\delta=1-\frac{1}{N}\max\limits_{1 \le i<j \le M} |\ip{W_i}{W_j}|$.\\
\textbf{Steps:} & \\
& 1. Prepare $K=O(\frac{\log M}{\min\{\epsilon^2,\delta\}})$ copies of
the state $\ket{v(U)}$.\\
& 2. Perform the measurement $\{\Pi_K,I-\Pi_K\}$ on 
$\ket{v(U)}^{\otimes K}$, where $\Pi_K$ is the projection operator
onto the subspace $\WK=\span\{\ket{v(W_1)}^{\otimes K}, \ket{v(W_2)}^{\otimes K}, \dots, \ket{v(W_m)}^{\otimes K}\}$. 
If the outcome corresponds to $\Pi_K$, then accept; otherwise, reject.\\
\end{tabular}
\label{alg:general}
\end{algorithm}

Obviously, if $U \in [\S]$, then the algorithm always accepts. So it remains to show if $D(U,\S) \ge \epsilon$, then the algorithm accepts with probability at most $1/3$.

Choose $K=O(\frac{\log M}{\min\{\epsilon^2,\delta\}})$
such that 
\bea
(1-\delta)^K &\le e^{-\delta K} & \le \dfrac{1}{5M},
\label{eq:generalKd}\\
(1-\epsilon^2)^K & \le e^{-\epsilon^2 K} & \le \dfrac{1}{5M}.
\label{eq:generalKe}
\eea

Without loss of generality, we can assume that $\ip{U}{W_i}$ is real for all $i$, because, if otherwise, we can replace $W_i$ by some $W_i' \in [W_i]$ so that this condition is fulfilled. Then we have 
\beba
\bracket{v(U)}{v(W_i)}&=&\frac{1}{N}\ip{U}{W_i}\\
&=&1-D^2(U,W_i)\\
& \le & 1-\epsilon^2.
\label{eq:vuvwi}
\eeea
Moreover, without loss of generality, we can also assume that $[W_1],[W_2],\dots,[W_M]$ are disjoint. Then $\ket{v(W_1)},\ket{v(W_2)},\dots,\ket{v(W_M)}$
are linearly independent and $\WK$ is $M$-dimensional.
Let $\{\ket{\psi_i}= \ssll{j=1}{M} \lambda_{i,j} \ket{v(W_j)}^{\otimes K}\}_{i=1}^M$ be an
arbitrary orthonormal basis for $\WK$. Then we have
\beba
M &=& \ssll{i=1}{M}\bracket{\psi_i}{\psi_i}\\
&=& \ssll{i,j,j'=1}{M} \lambda_{i,j}^*\lambda_{i,j'} (\bracket{v(W_j)}{v(W_{j'})})^K\\
&=& \ssll{i=1}{M}\ssll{j=1}{M} |\lambda_{i,j}|^2
+\ssll{i=1}{M} \sum\limits_{j \neq j'} \lambda_{i,j}^*\lambda_{i,j'}(\bracket{v(W_j)}{v(W_{j'})})^K\\
& \ge & \ssll{i=1}{M}\ssll{j=1}{M} |\lambda_{i,j}|^2 - \ssll{i=1}{M}\sum\limits_{j \neq j'} \frac{|\lambda_{i,j}|^2+|\lambda_{i,j'}|^2}{2}(1-\delta)^K \\
& \ge & \frac{1}{2}\ssll{i=1}{M}\ssll{j=1}{M} |\lambda_{i,j}|^2,
\eeea
i.e.
\beba
\ssll{i=1}{M}\ssll{j=1}{M} |\lambda_{i,j}|^2 \le 2M.
\label{eq:lambdaij}
\eeea
Now if we perform the measurement $\{\Pi_K,I-\Pi_K\}$ on 
$\ket{v(U)}^{\otimes K}$, then the probability of
obtaining the outcome corresponding to $\Pi_K$ is
\beba
&&\tr(\Pi_K (|v(U) \rangle\langle v(U)|)^{\otimes K})\\
&=& \ssll{i=1}{M} |\langle \psi_i|v(U) \rangle^{\otimes K}|^2\\
&=& \ssll{i=1}{M} | \ssll{j=1}{M} \lambda_{i,j}^* (\langle v(W_j) |v(U) \rangle)^K|^2\\
&=& \ssll{i=1}{M} \ssll{j,j'=1}{M} \lambda_{i,j}^*\lambda_{i,j'} (\langle v(W_j) |v(U) \rangle)^K
(\langle v(U) |v(W_{j'}) \rangle)^K \\
&\le & (1-\epsilon^2)^{2K}  \ssll{i=1}{M} \ssll{j,j'=1}{M} |\lambda_{i,j}||\lambda_{i,j'}| \\
&\le & \frac{1}{25M^2}  \ssll{i=1}{M} \ssll{j,j'=1}{M} \frac{|\lambda_{i,j}|^2+|\lambda_{i,j'}|^2}{2} \\
&\le & \frac{1}{25M} \ssll{i=1}{M}\ssll{j=1}{M} |\lambda_{i,j}|^2 \\
&\le & \frac{1}{10},
\eeea
where in the fourth step we use Eq.(\ref{eq:vuvwi}),
and in the last step we use Eq.(\ref{eq:lambdaij}).
\end{proof}

\subsection{Example: Testing Permutations}
Let us demonstrate an application of theorem \ref{thm:general} to the permutation group. These operators just relabel the $n$ qubits of the system but do nothing else. Formally, let $S_n$ denote the group of permutations over $\{1,2,\dots,n\}$. Then any $\tau \in S_n$ is viewed as a unitary operator as follows
\beba
\tau \ket{i_1,i_2,\dots,i_n}=\ket{i_{\tau(1)}, i_{\tau(2)},\dots, i_{\tau(n)}}, 
\eeea 
for any $i_1,i_2, \dots, i_n=0,1.$

For any $\tau_1 \neq \tau_2$, let $\gamma = \tau_1^{-1}\tau_2 \neq I$. Note that $\gamma$ can be decomposed into several disjoint cycles
\beba
\gamma = (a_{1,1}, \dots, a_{1,k_1})
(a_{2,1}, \dots, a_{2,k_2})
\dots(a_{l,1}, \dots, a_{l,k_l}),
\label{eq:perm}
\eeea
where $l \le n-1$. So
\beba
\ip{\tau_1}{\tau_2}&=&\tr(\tau_1^{\dagger}\tau_2)\\
&=&\tr(\gamma)\\
&=&\tr(\sum\limits_{i_1,\dots,i_n=0,1} \ketbra{i_{\gamma(1)},\dots,i_{\gamma(n)}}{i_1,\dots,i_n})\\
&=&\sum\limits_{i_1,\dots,i_n=0,1} \bracket{i_1,\dots,i_n}{i_{\gamma(1)},\dots,i_{\gamma(n)}}.\\
\eeea
By Eq.(\ref{eq:perm}), the only nonzero terms on the right-hand side are those satisfying 
\beba
i_{a_{j,1}}=i_{a_{j,2}}=\dots=i_{a_{j,k_j}},~
\forall j=1,2,\dots,l.
\eeea
Hence 
\beba
\ip{\tau_1}{\tau_2}=2^l \le \dfrac{N}{2}.
\eeea
This holds for any $\tau_1 \neq \tau_2$. Therefore,
\beba
\delta=1-\dfrac{1}{N}\max\limits_{\tau_1 \neq \tau_2}|\ip{\tau_1}{\tau_2}| \ge \dfrac{1}{2}.
\eeea
Besides, note that $|S_n|=n!=O(e^{n \log n})$. So by theorem \ref{thm:general}, we get
\begin{theorem}
The permutation group $S_n$ can be $\epsilon$-tested
with query complexity $O({n}\log n/{\epsilon^2})$.
\end{theorem}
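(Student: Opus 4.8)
The plan is to read this statement off directly from Theorem~\ref{thm:general}, applied to the finite set $\S=S_n\subset\UN$ of qubit-permutation unitaries, which has $M=|S_n|=n!$ elements. Theorem~\ref{thm:general} needs exactly two inputs about this set: a lower bound on the separation parameter $\delta=1-\tfrac1N\max_{\tau_1\neq\tau_2}|\ip{\tau_1}{\tau_2}|$, and the value of $\log M$. Both are elementary, and the overlap estimate has in fact already been carried out in the discussion preceding the statement, so the proof amounts to assembling these pieces.

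For the overlaps I would fix distinct $\tau_1,\tau_2\in S_n$, put $\gamma=\tau_1^{-1}\tau_2\neq I$, and note $\ip{\tau_1}{\tau_2}=\tr(\tau_1^{\dagger}\tau_2)=\tr(\gamma)$. Writing $\gamma$ as a product of $l$ disjoint cycles (counting fixed points as length-$1$ cycles), a computational-basis vector $\ket{i_1,\dots,i_n}$ is fixed by $\gamma$ precisely when the bits are constant along each cycle, so exactly $2^l$ basis vectors are fixed and $\tr(\gamma)=2^l$. Since $\gamma\neq I$, some cycle has length $\ge 2$, hence $l\le n-1$ and $|\ip{\tau_1}{\tau_2}|=2^l\le 2^{n-1}=N/2$; maximizing over pairs gives $\delta\ge 1/2$. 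For the cardinality, $\log M=\log(n!)=O(n\log n)$ by Stirling. Feeding $\delta\ge 1/2$ and $\log M=O(n\log n)$ into Theorem~\ref{thm:general} yields an $\epsilon$-test for $S_n$ with query complexity $O\!\big(\log M/\min\{\epsilon^2,\delta\}\big)=O\!\big(n\log n/\min\{\epsilon^2,1/2\}\big)$.

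It then remains only to absorb the minimum into a clean bound. Without loss of generality $\epsilon\le 1$: since $D(U,V)\le 1$ for all unitaries, no $U\in\UN$ is $\epsilon$-far from $S_n$ when $\epsilon>1$, so the trivial always-accept algorithm $\epsilon$-tests $S_n$ with no queries; and for $\epsilon\le 1$ we have $\min\{\epsilon^2,1/2\}\ge\epsilon^2/2$, so the query complexity becomes $O(n\log n/\epsilon^2)$, as claimed, with the one-sided error inherited from Theorem~\ref{thm:general}. I do not expect any real obstacle here: the only nontrivial ingredient, the trace estimate $|\ip{\tau_1}{\tau_2}|\le N/2$, reduces to the one-line counting argument above about which basis states a nontrivial permutation fixes, and everything else is bookkeeping together with the trivial reduction to $\epsilon\le 1$.
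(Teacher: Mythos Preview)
Your proposal is correct and follows essentially the same route as the paper: compute $\tr(\tau_1^{-1}\tau_2)=2^l$ via the cycle decomposition to get $\delta\ge 1/2$, use $\log(n!)=O(n\log n)$, and plug into Theorem~\ref{thm:general}. Your explicit handling of the $\min\{\epsilon^2,\delta\}$ via the reduction to $\epsilon\le 1$ is a small refinement over the paper, which simply states the resulting bound.
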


Note that it is unknown whether algorithm \ref{alg:general} can be implemented efficiently in general. So it still remains open to find a testing algorithm for $S_n$ that has both polynomial query complexity and polynomial computational complexity.

\section{Conclusion}
In summary, we have systematically studied property testing of unitary operators with respect to the distance measure $D$ that reflects the average difference between unitary operators. We present efficient algorithms for testing the orthogonal group, quantum juntas and Clifford group. All these algorithms have only one-sided error and their query complexities are independent of the system's size. We also give an algorithm that tests any finite subset of the unitary group, and show an application of this algorithm to the permutation group. This algorithm also has one-sided error and polynomial query complexity, although it is unknown whether it can be efficiently implemented in general. 

Despite the progress made in recent papers and ours, the testing of quantum objects (states or operations) still remains widely open. We hope that our work can shed some light on this topic and stimulate further research. Here are several directions that seems particularly interesting to us:

First, in this paper we focus on giving \textit{upper bounds} on the query complexity of testing unitary operators. It is worth developing powerful techniques that can derive \textit{lower bounds} on the query complexity of the same task.
In particular, it is interesting to prove
that our testing algorithms are optimal, or give
better algorithms.

Second, as mentioned above, we do not know whether algorithm \ref{alg:general} can be efficiently implemented, or whether a better algorithm can be given for testing general discrete subsets. And it would be interesting to give an efficient algorithm for testing the permutation group. Furthermore, can we give some general results on testing continuous subgroups?

Third, here we only considered testing unitary operators. It is worth exploring property testing of general quantum states and quantum operations. For example, is it possible to test (or estimate) the entangling power of multipartite quantum operations by using only a few queries? Is it possible to test whether a given multipartite state belongs to an interesting class, such as symmetric or antisymmetric states, by consuming only a few copies?

At last, the query complexity of property testing crucially depends on the distance measure used. Here we considered a particular distance measure $D$ that seems quite suitable for comparing unitary operators. It would be interesting to study the testing of quantum states or channels with respect to other natural distance measures.

\section*{Acknowledgments}
This research was supported by NSF Grant CCR-0905626
and ARO Grant W911NF-09-1-0440. 

\begin{appendix}
\section{Proof of lemma \ref{lem:clif}}
If $\delta \ge 1/4$, then
obviously $D(U,\Cn)\le 1 \le 4\delta$.
So from now on we assume $\delta<1/4.$
Define $F: \mathbb{Z}_4^n \to \mathbb{Z}_4^n$
and $\Theta: \mathbb{Z}_4^n \to [0,2\pi)$
such that for any $\vx \in \mathbb{Z}_4^n$,
\beba
D(U\sigma_{\vx}U^{\dagger},
\Pn)=\frac{1}{\sqrt{2N}}\| U\sigma_{\vx}U^{\dagger}-e^{i\Theta(\vx)} \sigma_{F(\vx)} \| \le \delta,
\label{eq:uxu}
\eeea 
i.e.  $e^{i\Theta(\vx)}\sigma_{F(\vx)}$ is
the closest element to $U\sigma_{\vx}U^{\dagger}$
in $[\Pn]$. We will prove that
\ben
\item $\Theta(\vx)=0$ or $\pi$, $\forall \vx \in \mathbb{Z}_4^n$;
\item $F(\vx)\oplus F(\vy)=F(\vx \oplus \vy)$, $\forall \vx,\vy \in \mathbb{Z}_4^n$;
\item $\sigma_{\vx}$ and $\sigma_{\vy}$
commute (or anticommute) if and only
if $\sigma_{F(\vx)}$ and $\sigma_{F(\vy)}$
commute (or anticommute),
$\forall \vx,\vy \in \mathbb{Z}_4^n$.
\een
Let us first assume these statements are true. Then the mapping $\sigma_{\vx} \to \gamma_{\vx}\sigma_{F(\vx)}$ is an isomorphism, where $\gamma_{\vx}=e^{i\Theta(\vx)}= \pm 1$.
Hence there exists a Clifford operator $C \in \Cn$ such that 
\beba
C \sigma_{\vx} C^{\dagger}= \gamma_{\vx}\sigma_{F(\vx)},~\forall \vx\in \mathbb{Z}_4^n.
\eeea
Then by Eq.(\ref{eq:uxu}), we have
\beba
\langle U\sigma_{\vx}U^{\dagger},
C \sigma_{\vx} C^{\dagger}
\rangle \ge N(1-\delta^2),
~\forall \vx\in \mathbb{Z}_4^n.
\eeea
Consequently,
\beba
1-\delta^2
&\le & \frac{1}{N^3} \sum\limits_{\vx \in \mathbb{Z}_4^n} \langle U\sigma_{\vx}U^{\dagger},
C \sigma_{\vx} C^{\dagger} \rangle \\
&\le & \frac{1}{N^3} \sum\limits_{\vx \in \mathbb{Z}_4^n}  
\tr(U\sigma_{\vx}U^{\dagger}C \sigma_{\vx} C^{\dagger}) \\
&\le & \frac{1}{N^3} 
\tr(U
(\sum\limits_{\vx \in \mathbb{Z}_4^n} 
\sigma_{\vx} U^{\dagger}C \sigma_{\vx}) C^{\dagger}) \\
&=& \frac{1}{N^3} 
\tr(U
(N\tr(U^{\dagger}C)I) C^{\dagger}) \\
&=& \frac{1}{N^2} 
|\tr(U^{\dagger}C)|^2, 
\eeea
where in the fourth step we use the fact 
\beba
\frac{1}{N}\sum\limits_{\vx \in \mathbb{Z}_4^n} 
\sigma_{\vx} A \sigma_{\vx}
= \tr(A) I, ~\forall A \in \MNN.
\eeea
Hence, we have 
\beba
|\langle U,C\rangle| \ge N \sqrt{1-\delta^2}
\ge N (1-\delta^2).
\eeea
Then by Eq.(\ref{eq:dip}) we obtain
\beba
D(U,C) \le \delta.
\eeea

Now we prove statements 1-3. The first one is easy. 
Since $\sigma_{\vx}$ is Hermitian,
$U\sigma_{\vx}U^{\dagger}$ is also Hermitian.
As a result, $\tr(U\sigma_{\vx}U^{\dagger}\sigma_{F(\vx)})$ is real. $\Theta(\vx)$ is chosen
from $[0,2\pi)$ such that 
\beba
e^{i\Theta(\vx)}\tr(U\sigma_{\vx}U^{\dagger}\sigma_{F(\vx)})
=|\tr(U\sigma_{\vx}U^{\dagger}\sigma_{F(\vx)})|.
\eeea 
So $e^{i\Theta(\vx)}$ can only be $\pm 1$ and
$\Theta(\vx)$ can only be $0$ or $\pi$.

To prove the second statement, let us suppose
\beba
U \sigma_{\vx} U^{\dagger} &=& \lambda_{F(\vx)} \sigma_{F(\vx)}
+ \sum\limits_{\vz \neq F(\vx)} \lambda_{\vz} \sigma_{\vz}, \\
U \sigma_{\vy} U^{\dagger} &=& \xi_{F(\vy)} \sigma_{F(\vy)}
+ \sum\limits_{\vz \neq F(\vy)} \xi_{\vz} \sigma_{\vz}.\\
\label{eq:usxuusyu}
\eeea
By $D(U \sigma_{\vx} U^{\dagger},\sigma_{F(\vx)}) \le \delta$
and $D(U \sigma_{\vy} U^{\dagger},\sigma_{F(\vy)}) \le \delta$,
we get
\beba
|\lambda_{F(\vx)}| \ge 1-\delta^2,\\
|\xi_{F(\vy)}| \ge 1-\delta^2.\\
\eeea
Now 
\beba
U \sigma_{\vx \oplus \vy}U^{\dagger}
\sim
U \sigma_{\vx} U^{\dagger} \cdot
U \sigma_{\vy} U^{\dagger}
= \sum\limits_{\vz \in \mathbb{Z}_4^n} \eta_{\vz} \sigma_{\vz}.
\eeea
for some $\eta_{\vz}$'s. Since $D(U \sigma_{\vx \oplus \vy}U^{\dagger},
\sigma_{F(\vx \oplus \vy)}) \le \delta$,
we have 
\beba
|\eta_{F(\vx \oplus \vy)}| \ge 1-\delta^2.
\eeea
Meanwhile, the coefficient $\eta_{F(\vx)\oplus F(\vy)}$ satisfies
\beba
|\eta_{F(\vx)\oplus F(\vy)}|
&=&|\sum\limits_{\vz_1\oplus\vz_2=F(\vx)\oplus F(\vy)}
i^{\vz_1 \odot \vz_2} 
\lambda_{F(\vz_1)}\xi_{F(\vz_2)}|\\
&=&|i^{F(\vx) \odot F(\vy)} 
\lambda_{F(\vx)}\xi_{F(\vy)}\\
&&+\sum\limits_{\vz\neq F(\vx)}
i^{\vz \odot (\vz \oplus F(\vx) \oplus F(\vy))} 
\lambda_{\vz}\xi_{\vz \oplus F(\vx) \oplus F(\vy)}|\\
&\ge &
|\lambda_{F(\vx)}\xi_{F(\vy)}|
-\sum\limits_{\vz\neq F(\vx)}
|\lambda_{\vz}\xi_{\vz \oplus F(\vx) \oplus F(\vy)}|\\
&\ge & (1-\delta^2)^2
-\sum\limits_{\vz\neq F(\vx)}
\frac{|\lambda_{\vz}|^2+|\xi_{\vz \oplus F(\vx) \oplus F(\vy)}|^2}{2} \\
&=& (1-\delta^2)^2
-\frac{1-|\lambda_{F(\vx)}|^2}{2}
-\frac{1-|\xi_{F(\vy)}|^2}{2} \\
&\ge & (1-\delta^2)^2-(1-(1-\delta^2)^2)\\
&\ge & 1-4 \delta^2.
\label{eq:etafxfy}
\eeea
If $F(\vx) \oplus F(\vy) \neq F(\vx \oplus \vy)$,
then we would have
\beba
1 &\ge& |\eta_{F(\vx \oplus \vy)}|^2
+ |\eta_{F(\vx)\oplus F(\vy)}|^2 \\
&\ge & (1-\delta^2)^2+(1-4\delta^2)^2 \\
& \ge & 2-10 \delta^2 \\
& > & 1,
\eeea
which is a contradiction.
Therefore, $F(\vx) \oplus F(\vy)=F(\vx \oplus \vy)$.

Finally, we prove statement 3. Let us first consider the case when $\sigma_{\vx}$ and $\sigma_{\vy}$ commute. Then 
$U\sigma_{\vx}U^{\dagger}$ and 
$U\sigma_{\vy}U^{\dagger}$ also commute.
We still assume Eq.(\ref{eq:usxuusyu}),
and suppose 
\beba
&U \sigma_{\vx} U^{\dagger} \cdot
U \sigma_{\vy} U^{\dagger}
&= \sum\limits_{\vz \in \mathbb{Z}_4^n} \eta_{\vz} \sigma_{\vz} \\
=& U \sigma_{\vy} U^{\dagger} \cdot
U \sigma_{\vx} U^{\dagger}
&= \sum\limits_{\vz \in \mathbb{Z}_4^n} \chi_{\vz} \sigma_{\vz}.
\eeea
If $\sigma_{F(\vx)}$ and
$\sigma_{F(\vy)}$ anticommute, then we have
$i^{F(\vx) \odot F(\vy)}=-i^{F(\vy) \odot F(\vx)}$, 
and furthermore, by Eq.(\ref{eq:etafxfy}), we get 
\beba
0&=&|\eta_{F(\vx)\oplus F(\vy)}
-\chi_{F(\vy)\oplus F(\vx)}|\\
&=&|(i^{F(\vx) \odot F(\vy)}- i^{F(\vy) \odot F(\vx)})
\lambda_{F(\vx)}\xi_{F(\vy)}\\
&&+\sum\limits_{\vz\neq F(\vx)}
(i^{\vz \odot (\vz \oplus F(\vx) \oplus F(\vy))} 
-i^{(\vz \oplus F(\vx) \oplus F(\vy)) \odot \vz} )\\
&&\lambda_{\vz}\xi_{\vz \oplus F(\vx) \oplus F(\vy)}| \\
& \ge & 2|\lambda_{F(\vx)}\xi_{F(\vy)}|
-2\sum\limits_{\vz\neq F(\vx)}
|\lambda_{\vz}\xi_{\vz \oplus F(\vx) \oplus F(\vy)}| \\
& \ge & 2(1-4\delta^2) \\
& > & 0,
\eeea
which is a contradiction. So $\sigma_{F(\vx)}$ and $\sigma_{F(\vy)}$ must commute. The case when $\sigma_{\vx}$ and $\sigma_{\vy}$ anticommute can be handled by a similar argument.
\end{appendix}


\begin{thebibliography}{99}

\bibitem{NC00} M. A. Nielsen and I. L. Chuang, \emph{Quantum Computation and Quantum Information} (Cambridge University Press,
Cambridge, England, 2000).

\bibitem{CN97} I. Chuang and M. Nielsen, J. Mod. Opt. \textbf{44}, 2455 (1997).

\bibitem{PCZ97} J. F. Poyatos, J. I. Cirac, and P. Zoller, Phys. Rev. Lett. \textbf{78}, 390 (1997).

\bibitem{AP01} G. M. D'Ariano and P. Lo Presti, Phys. Rev. Lett. \textbf{86}, 4195 (2001).

\bibitem{AP03} G. M. D'Ariano and P. Lo Presti, Phys. Rev. Lett. \textbf{91}, 047902 (2003).

\bibitem{ABJ+03} J. B. Altepeter, D. Branning, E. Jeffrey, T. C. Wei, P. G. Kwiat, R. T. Thew†, J. L. O’Brien, M. A. Nielsen and A. G. White, Phys. Rev. Lett. \textbf{90}, 193601 (2003).

\bibitem{EAZ05} J. Emerson, R. Alicki and K. \.{Z}yczkowski,
J. Opt. B: Quantum Semiclass. Opt. \textbf{7}, S347 (2005).

\bibitem{ML06} M. Mohseni and D. A. Lidar,
Phys. Rev. Lett. \textbf{97}, 170501 (2006).

\bibitem{ESM+07} J. Emerson, M. Silva, O. Moussa, C. Ryan, M. Laforest, J. Baugh, D. G. Cory and R. Laflamme, Science \textbf{317}, 1893 (2007).

\bibitem{ML07} M. Mohseni and D. A. Lidar,
Phys. Rev. A \textbf{75}, 062331 (2007).

\bibitem{BPP08} A. Bendersky, F. Pastawski and J. P. Paz, 
Phys. Rev. Lett. \textbf{100}, 190403 (2008).

\bibitem{DCEL09} C. Dankert, R. Cleve, J. Emerson and E. Livine, Phys. Rev. A \textbf{80}, 012304 (2009).

\bibitem{BNWK09} M. P. A. Branderhorst, J. Nunn, I. A. Walmsley and R. L. Kosut, New J. Phys. \textbf{11}, 115010 (2009).

\bibitem{BPP09} A. Bendersky, F. Pastawski and J. P. Paz, 
Phy. Rev. A \textbf{80}, 032116 (2009).

\bibitem{SLP10} C. T. Schmiegelow, M. A. Larotonda and J. P. Paz, Phys. Rev. Lett. \textbf{104}, 123601 (2010).

\bibitem{DLF+10} D. Gross, Y.-K. Liu, S. T. Flammia, S. Becker, J. Eisert, Phys. Rev. Lett. \textbf{105}, 150401 (2010).

\bibitem{SKM+11} A. Shabani, R. L. Kosut, M. Mohseni, H. Rabitz, M. A. Broome, M. P. Almeida, A. Fedrizzi and A. G. White, Phys. Rev. Lett. \textbf{106}, 100401 (2011).

\bibitem{SBLP11} C. T. Schmiegelow, A. Bendersky, M. A. Larotonda and J. P. Paz, Phys. Rev. Lett. \textbf{107}, 100502 (2011).

\bibitem{MGE11} E. Magesan, J. M. Gambetta and J. Emerson, 
Phys. Rev. Lett. 106, 180504 (2011).

\bibitem{BLR93} M. Blum, M. Luby and R. Rubinfeld, J. Comp. Syst. Sci. \textbf{47}, 549 (1993).

\bibitem{RS96} R. Rubinfeld and M. Sudan, SIAM J. Comput. \textbf{25}, 252 (1996).

\bibitem{GGR98} O. Goldreich, S. Goldwasser and D. Ron, J. ACM, \textbf{45}, 653 (1998).

\bibitem{ALM+98} S. Arora, C. Lund, R. Motwani, M. Sudan, and M. Szegedy, J. ACM \textbf{45}, 501 (1998).

\bibitem{MO09} A. Montanaro and T. J. Osborne, arXiv:0810.2435.

\bibitem{Low09} R. A. Low, Phys. Rev. A \textbf{80}, 052314 (2009).

\bibitem{HM10} A. W. Harrow and A. Montanaro, in
\textit{Proceedings of the 51st IEEE Annual Symposium on Foundations  of Computer Science} (2010), p. 633.

\bibitem{SCP11}  M. P. da Silva, O. Landon-Cardinal and D. Poulin, arXiv:1104.3835.

\bibitem{FL11} S. T. Flammia and Y.-K. Liu, Phys. Rev. Lett. \textbf{106}, 230501 (2011).

\bibitem{BFNR02} H. Buhrman, L. Fortnow, I. Newmank and H. R\"{o}hrig, in \textit{Proceedings of the 14th Annual ACM-SIAM Symposium on Discrete Algorithms} (2003), p. 480.

\bibitem{Jam72} A. Jamio\l kowski, Reports on 
Mathematical Physics, \textbf{3}, 275 (1972).

\bibitem{Cho75} M.-D. Choi, Linear Alg. and Its Appl. \textbf{10}, 285 (1975).

\bibitem{Sho95} P. W. Shor, Phys. Rev. A \textbf{52}, R2493 (1995).
\bibitem{BDSW96} C. H. Bennett, D. P. DiVincenzo, J. A. Smolin and W. K. Wootters, Phys. Rev. A \textbf{54}, 3824 (1996).
\bibitem{Ste96} A. M. Steane, Phys. Rev. Lett. \textbf{77}, 793 (1996).

\bibitem{Sho96} P. Shor, in 
\textit{Proceedings of the 37th IEEE Annual Symposium on Foundations of Computer Science} (1996), p. 56.

\bibitem{Got98} D. Gottesman, Phys. Rev. A
\textbf{57}, 127 (1998).


\end{thebibliography}
\end{document}